\newcommand{\cmultigate}[2]{*+<1em,.9em>{\hphantom{#2}} \POS [0,0]="i",[0,0].[#1,0]="e",!C *{#2},"e"+UR;"e"+UL **\dir{-};"e"+DL **\dir{-};"e"+DR **\dir{-};"e"+UR **\dir{-},"i" \cw}
\newtheorem{dfn}{Definition}
\newtheorem{lmm}{Lemma}
\newtheorem{thm}{Theorem}
\newtheorem{cor}{Corollary}
\DeclareMathOperator{\Tr}{Tr}
\DeclareMathOperator{\openone}{\mathds{1}}
\DeclareMathOperator{\inst}{inst}
\DeclareMathOperator{\gram}{gram}
\DeclareMathOperator{\bv}{bv}
\DeclareMathOperator*{\argmax}{arg\,max}
\DeclarePairedDelimiter{\ceil}{\lceil}{\rceil}
\DeclarePairedDelimiter{\floor}{\lfloor}{\rfloor}
\begin{document}

\title{On the Measurement attaining the Quantum Guesswork}

\author{Michele  Dall'Arno\thanks{M. Dall'Arno  is with  the
    Department   of   Computer  Science   and   Engineering,
    Toyohashi  University  of Technology,  1-1  Hibarigaoka,
    Tempaku-cho, Toyohashi, Aichi, 441-8580, Japan, and with
    the  Yukawa  Institute  for Theoretical  Physics,  Kyoto
    University, Sakyo-ku,  Kyoto, 606-8502,  Japan (e--mail:
    michele.dallarno.mv@tut.jp)}}

\maketitle

\begin{abstract}
  The  guesswork quantifies  the  minimum  cost incurred  in
  guessing the state of an ensemble, when only one state can
  be queried  at a time. In  the classical case, it  is well
  known  that the  optimal  strategy  trivially consists  of
  querying  the  states  in their  non-increasing  order  of
  posterior probability.  In the  quantum case, on the other
  hand,  the most  general  strategy to  obtain the  optimal
  ordering  in which  to perform  the queries  consist of  a
  quantum measurement.  Here, we  solve such an optimization
  problem by deriving the  quantum measurement attaining the
  guesswork  for  a  broad   class  of  ensembles  and  cost
  functions.
\end{abstract}

In the classical scenario, the guesswork problem, introduced
by Massey~\cite{Mas94} almost three decades ago, consists of
minimizing the number  of queries needed to  guess the value
of a random variable, when only  one value can be queried at
a time.   Already in Massey's  seminal work, it  was noticed
that the optimal strategy trivially consists of querying the
values in non-increasing order of their (prior) probability.
Hence, Massey's  work, as  well as  following works  such as
those by Arikan~\cite{Ari96}  and others~\cite{AM98a, AM98b,
  Pli98, MS04, Sun07, HS10, CD13, SV18, Sas18, AH20}, mostly
focused  on characterizing  properties  of  such an  optimal
strategy, in particular in relation with the Renyi entropies
of such a probability distribution.

Even  in   the  presence   of  classical   side  information
(borrowing  from the  language of  quantum theory,  when one
state from  a pairwise commuting ensemble  is made available
for any value attained by  the random variable), the optimal
strategy remains trivial: as noticed by Arikan~\cite{Ari96},
it consists of measuring the given state in the basis of the
common eigenvectors  of the ensemble, and  then querying the
values  in  the  non-increasing  order  of  their  posterior
probabilities.  The  triviality, in  this case,  steams from
the fact that, due to the linearity of the problem, there is
only  one   candidate  (i.e.    extremal)  as   the  optimal
measurement.

The guesswork problem has only very recently been tackled in
the   quantum  case~\cite{CCWF15,   HKDW22,  DBK22,   DBK23,
  AND23}. The operational setup we  consider here is that of
such references.  While previous works~\cite{CCWF15, HKDW22}
on  the  quantum  guesswork  continued along  the  lines  of
characterizing entropic properties of the guesswork, it must
be remarked that in the presence of quantum side information
the problem  of computing  the optimal guesswork  is instead
far  from  trivial  because   the  states  of  the  ensemble
generally do  not commute.  It has  been shown~\cite{HKDW22}
that   the   optimal   quantum  strategy   consists   of   a
single-measurement  that simultaneously  produces as  output
the optimal order in which  the queries should be performed;
technically, such  a strategy performs  as well as  the most
general sequential quantum-instrument  strategy; however, in
the  same reference  the optimization  of the  guesswork was
performed  only numerically,  and  hence approximately,  and
solely for a couple of qubit ensembles (trine and square).

Deriving  the optimal  quantum measurement~\cite{DBK22}  has
been  proven~\cite{DBK23}  to  be, in  general,  an  NP-hard
continuous optimization problem, and it is the focus of this
work.   In  particular,   in  Ref.~\cite{DBK22}   sufficient
conditions  were derived  for a  two-outcome measurement  to
attain  the  guesswork. It  must  be  noted that,  once  the
optimal  quantum  measurement has  been  found,  one has  an
updated posterior probability  distribution, and the problem
goes back to its classical version.

Formally, the  guesswork problem can be  conveniently framed
as a theoretical  game involving two parties,  say Alice and
Bob. At each round, Alice and  Bob are given a classical and
a quantum  state, respectively, the latter  solely dependant
on  the  former (this  naturally  corresponds  to a  quantum
ensemble or, equivalently,  a classical-quantum state).  Bob
queries Alice one state at a time until he correctly guesses
her  classical state,  at which  point he  pays a  cost that
solely depends on  the number of queries he  had to perform.
The  probability distribution  according to  which classical
states  are  sampled,  the  mapping  between  classical  and
quantum states, as  well as the cost function,  are known in
advance to the parties.

The  minimum  quantum  guesswork  problem  consists  of  the
minimization   of  Bob's   average  cost.    While  previous
works~\cite{HKDW22, DBK22}  were restricted  to the  case of
the identity  cost function,  here we consider  an arbitrary
balanced  cost function;  we remark  that, in  the classical
case, such an extension has been considered very recently by
Arslan and Haytaoglu~\cite{AH20}.   This scenario describes,
for instance, the case in  which Alice increases the cost of
successive  queries to  discourage  Bob from  a brute  force
attack.

When the  quantum state Bob  has access to corresponds  to a
two-level quantum  system, or  qubit, we recognize  that the
computation  of  the  minimum  guesswork,  by  definition  a
continuous optimization  problem over  quantum measurements,
is  equivalent  to  a  particular  instance  of  the  finite
combinatorial  optimization problem  known as  the quadratic
assignment  problem~\cite{KB57, SG76,  LM88, Cel98,  BCPP98,
  BCRW98} (QAP).  Specifically, given  that qubit states can
be represented  as vectors  in a ball  (the Bloch  ball), we
show that  the optimal measurement  is given by  finding the
maximum-length  linear combination  of  the  vectors of  the
ensemble with  weights given by  the values attained  by the
cost function.

Finally, we solve such an instance  of the QAP for the class
of so-called benevolent qubit  ensembles, that is, ensembles
whose Gram matrix is benevolent~\cite{Cel98}. It immediately
follows  from known  results on  the QAP  that any  ensemble
whose  image in  the  Bloch  ball is  a  regular polygon  is
benevolent (the guesswork of regular polygonal ensembles was
computed in  an independent way in  Ref.~\cite{DBK22}); here
we show that uniform anti-prismatic ensembles are benevolent
too.   Such a  class includes  ensembles that  commonly find
application   in  quantum   informational  tasks,   such  as
symmetric,  informationally  complete   (SIC)  and  mutually
unbiased  basis   (MUBs)  ensembles,  that   corresponds  to
tetrahedrons and octahedrons in the Bloch ball.

The hierarchical structure of our results is as follows:
\begin{itemize}
\item   Theorem~\ref{thm:helstrom}   solves  the   guesswork
  problem (that is, recasts the continuous optimization as a
  combinatorial,  and  therefore finite,  optimization)  for
  ensembles satisfying certain condition (also verifiable by
  finite     exhaustive    search)     spelled    out     by
  Eq.~\eqref{eq:condition}.
\item Corollary~\ref{cor:qubit}  answers on  the affirmative
  and  constructively  the  question whether  the  class  of
  ensembles      satisfying       the      condition      of
  Theorem~\ref{thm:helstrom} is  non-empty, by  showing that
  the  class of  qubit  ensembles  with uniform  probability
  distribution   satisfies    Eq.~\ref{eq:condition},   thus
  framing guesswork problem in  this case as a combinatorial
  problem known as quadratic assignment.
\item Corollary~\ref{cor:benevolent} goes  one step further,
  by  showing  that for  qubit  ensembles  that satisfy  the
  so-called   ``benevolence''   condition,   the   quadratic
  assignment (typically  an NP-hard  problem) can  be solved
  (e.g.,  the complexity  goes  from  $O(N!)$ to  constant).
\item Corollary~\ref{cor:antiprism}, once  again, answers on
  the  affirmative and  constructively the  question whether
  the class of ``benevolent''  qubit ensembles is non-empty,
  by  providing a  simple  geometrical  condition for  qubit
  ensembles to be benevolent, i.e. by showing that ensembles
  whose  image in  the  Bloch sphere  is  an anti-prism  are
  benevolent. This  includes SICs and MUBs,  qubit ensembles
  typically considered in quantum informational protocols.
\end{itemize}

\section{Formalization}

The notation adopted in this work is consistent with that of
Ref.~\cite{DBK22}.  We use  standard definitions and results
from  quantum  information  theory  (see e.g.   part  II  of
Ref.~\cite{Wil17}).  A quantum  state  is  represented by  a
density matrix  $\rho$, that  is, a unit-trace  (i.~e.  $\Tr
\rho  =  1$)  positive-semidefinite  (i.~e.  $\rho  \ge  0$)
operator on a Hilbert  space $\mathcal{H}$. A quantum effect
(the element of  a quantum measurement) is  represented by a
positive-semidefinite  ($\pi   \ge  0$)  operator   that  is
majorized by the identity ($\pi \le \openone$); the elements
of a  measurement sum up  to the identity.   The conditional
probability of outcome $\pi$ given  state $\rho$ is given by
the Born rule,  that is $P ( \pi  | \rho ) = \Tr  [ \rho \pi
]$.

For any  finite dimensional Hilbert space  $\mathcal{H}$, we
denote with $\mathcal{L} ( \mathcal{H} )$ and $\mathcal{L}_+
( \mathcal{H}  )$ the space  of Hermitian operators  and the
cone  of positive  semidefinite operators,  respectively, on
$\mathcal{H}$.

For any finite set  $\mathcal{M}$ and any finite-dimensional
Hilbert  space $\mathcal{H}$,  we denote  with $\mathcal{E}(
\mathcal{M},  \mathcal{H})$   the  set  of   ensembles  from
$\mathcal{M}$ to $\mathcal{H}$ given by
\begin{align*}
  \mathcal{E}  \left( \mathcal{M},  \mathcal{H} \right)  : =
  \left\{ \boldsymbol{\rho} :  \mathcal{M} \to \mathcal{L}_+
  \left( \mathcal{H} \right)  \Big| \sum_{m \in \mathcal{M}}
  \Tr \left[ \boldsymbol{\rho} \left(  m \right) \right] = 1
  \right\},
\end{align*}
that  is, ensembles  are defined  in terms  of subnormalized
states that sum up to a normalized state.

Let  us   first  introduce   the  most   general  sequential
instrument  strategy.  For  any  finite dimensional  Hilbert
spaces $\mathcal{H}_0$  and $\mathcal{H}_1$, we  denote with
$\mathcal{C}  \left(  \mathcal{H}_0, \mathcal{H}_1  \right)$
the   cone  of   completely   positive   linear  maps   from
$\mathcal{L}   (   \mathcal{H}_0   )$  to   $\mathcal{L}   (
\mathcal{H}_1 )$.

For any finite set  $\mathcal{M}$ and any finite-dimensional
Hilbert  spaces  $\mathcal{H}_0$   and  $\mathcal{H}_1$,  we
denote  with  $\mathcal{I}   (  \mathcal{M},  \mathcal{H}_0,
\mathcal{H}_1 )$  the set of instruments  from $\mathcal{M}$
to $\mathcal{C} ( \mathcal{H}_0, \mathcal{H}_1 )$ given by
\begin{align*}
  \mathcal{I}     \left(     \mathcal{M},     \mathcal{H}_0,
  \mathcal{H}_1  \right)  :=  \left\{  \boldsymbol{\epsilon}  :
  \mathcal{M}   \to   \mathcal{C}  \left(   \mathcal{H}_0,
  \mathcal{H}_1  \right)  \Big|   \sum_{m  \in  \mathcal{M}}
  \boldsymbol{\epsilon}   \left(  m   \right)\textrm{TP}
  \right\},
\end{align*}
where  TP   stands  for   trace-preserving  map,   that  is,
instruments are defined in terms of completely positive maps
that sum up to a trace preserving map.

For any finite set $\mathcal{M}$, any tuple $( \mathcal{H}_t
)_{t  = 0}^{|\mathcal{M}|}$  of  finite dimensional  Hilbert
spaces,  any ensemble  $\boldsymbol{\rho} \in  \mathcal{E} (
\mathcal{M},   \mathcal{H}_0   )$,    and   any   tuple   $(
\boldsymbol{\epsilon}_t    \in     \mathcal{I}    (\mathcal{M},
\mathcal{H}_{t-1}, \mathcal{H}_t) )_{t = 1}^{|\mathcal{M}|}$
of  instruments,  we  denote with  $p_{\boldsymbol{\rho},  (
  \boldsymbol{\epsilon}_t  )_t}$  the probability  distribution
that the outcomes of the sequential-instruments strategy are
$\mathbf{m}       =       (       \mathbf{m}_1,       \dots,
\mathbf{m}_{|\mathcal{M}|} )$  and that the $t$-th  query is
correct, that is
\begin{align*}
  &   p_{\boldsymbol{\rho},    \left(   \boldsymbol{\epsilon}_t
    \right)_t}  :   \mathcal{M}^{\times  \left|  \mathcal{M}
    \right|} \times \{ 1 , \dots, | \mathcal{M} | \} \to [0,
    1]    \\     &    \left(    \mathbf{m},     t    \right)
  \mapsto  \begin{cases} 0  &  \textrm{if $\mathbf{m}_t  \in
      \mathbf{m}_{\{1,  \dots,  t  -  1\}}$,}\\  \Tr  \left[
      \bigcirc_{s  =  1}^{t}  \boldsymbol{\epsilon}_{s}  \left(
      \mathbf{m}_{s}   \right)    \boldsymbol{\rho}   \left(
      \mathbf{m}_t \right) \right] & \textrm{otherwise,}
  \end{cases}
\end{align*}
where the symbol $\bigcirc$  denotes composition, or, in the
quantum circuit formalism,
\begin{align*}
  p_{\boldsymbol{\rho},      \left(     \boldsymbol{\epsilon}_t
    \right)_t} \left( \mathbf{m}, t \right) = \qquad
  \begin{aligned}
    \Qcircuit @C=.75em  @R=1.25em {  \lstick{\mathbf{m}_t} &
      \cw & \cw & \cghost{=} & \ustick{\mathbf{m}_t} \cw & &
      &  \cw  &  \cw &  \cghost{=}  &  \ustick{\mathbf{m}_t}
      \cw\\      &      \nghost{\boldsymbol{\epsilon}_1}      &
      \ustick{\mathbf{m}_1}   \cw  &   \cmultigate{-1}{=}  &
      \ustick{\textrm{no}}   \cw   &   \push{\cdots}   &   &
      \cghost{\boldsymbol{\epsilon}_t}  & \ustick{\mathbf{m}_t}
      \cw &  \cmultigate{-1}{=} &  \ustick{\textrm{yes}} \cw
      \\   \lstick{\boldsymbol{\rho}   \left(   \mathbf{m}_t
        \right)}  &  \multigate{-1}{\boldsymbol{\epsilon}_1}  &
      \qw      &     \qw      &      \qw      &     &      &
      \multigate{-1}{\boldsymbol{\epsilon}_t}  &  \qw &  \qw  &
      \qw}
  \end{aligned},
\end{align*}
for  any $\mathbf{m}  \in \mathcal{M}^{\times  | \mathcal{M}
  |}$ and  any $t \in \times  \{ 1 , \dots,  | \mathcal{M} |
\}$.    We    denote   with   $    q_{\boldsymbol{\rho},   (
  \boldsymbol{\epsilon}_t  )_t}$  the probability  distribution
that  the $t$-th  query is  correct, obtained  marginalizing
$p_{\boldsymbol{\rho}, ( \boldsymbol{\epsilon}_t )_t}$, that is
\begin{align*}
  &   q_{\boldsymbol{\rho},    \left(   \boldsymbol{\epsilon}_t
    \right)_t}  :  \left\{   1,  \dots,  \left|  \mathcal{M}
  \right| \right\} \to [0, 1]\\ & t \mapsto \sum_{\mathbf{m}
    \in  \mathcal{M}^{\times  \left|  \mathcal{M}  \right|}}
  p_{\boldsymbol{\rho},      \left(     \boldsymbol{\epsilon}_t
    \right)_t} \left( \mathbf{m}, t \right).
\end{align*}

For any finite set $\mathcal{M}$,  any function $\gamma : \{
1, \dots, |\mathcal{M}| \} \to \mathbb{R}$, and any tuple $(
\mathcal{H}_t   )_{t   =   0}^{|\mathcal{M}|}$   of   finite
dimensional     Hilbert    spaces,     we    denote     with
$G^{\gamma}_{\inst}$    the    guesswork    function    with
sequential-instruments  strategy, that  is, the  expectation
value of $\gamma$ given by
\begin{align*}
  &  G^{\gamma}_{\inst}  : \mathcal{E}  \left(  \mathcal{M},
  \mathcal{H}_0     \right)     \times    \bigtimes_{t     =
    1}^{|\mathcal{M}|}   \mathcal{I}   \left(   \mathcal{M},
  \mathcal{H}_{t-1},      \mathcal{H}_t     \right)      \to
  \mathbb{R}\\    &    \left(   \boldsymbol{\rho},    \left(
  \boldsymbol{\epsilon}_t \right)_t  \right) \mapsto  \sum_{t =
    1}^{\left|  \mathcal{M}   \right|}  \gamma\left(t\right)
  q_{\boldsymbol{\rho},      \left(     \boldsymbol{\epsilon}_t
    \right)_t} \left( t \right).
\end{align*}

\begin{dfn}[Minimum guesswork]
  For any  finite set $\mathcal{M}$, any  function $\gamma :
  \{  1, \dots,  |\mathcal{M}| \}  \to \mathbb{R}$,  and any
  finite  dimensional  Hilbert  space  $\mathcal{H}_0$,  the
  minimum guesswork $G_{\min}^{\gamma}$ is given by
  \begin{align*}
    &  G_{\min}^{\gamma} :  \mathcal{E} \left(  \mathcal{M},
    \mathcal{H} \right) \to \mathbb{R}\\ & \boldsymbol{\rho}
    \mapsto   \min_{\left(    \mathcal{H}_t   \right)_{t   =
        1}^{\left|           \mathcal{M}           \right|}}
    \min_{\left(\boldsymbol{\epsilon}  \in  \mathcal{I}  \left(
      \mathcal{M}, \mathcal{H}_{t-1},  \mathcal{H}_t \right)
      \right)_{t   =    1}^{\left|   \mathcal{M}   \right|}}
    G^{\gamma}_{\inst}    \left(     \boldsymbol{\rho},    (
    \boldsymbol{\epsilon}_t )_t \right).
  \end{align*}
\end{dfn}

Let  us   now  consider   the  case   of  single-measurement
strategy. For  any finite set $\mathcal{M}$,  we denote with
$\mathcal{N}_{\mathcal{M}}$   the  set   of  numberings   of
$\mathcal{M}$ given by
\begin{align*}
  \mathcal{N}_{\mathcal{M}} := \left\{  \mathbf{n} : \left\{
  1,   \dots,  \left|   \mathcal{M}  \right|   \right\}  \to
  \mathcal{M} \Big| \mathbf{n} \textrm{ bijective} \right\}.
\end{align*}
For  any  finite set  $\mathcal{M}$  and  any Hilbert  space
$\mathcal{H}$,     we      denote     with     $\mathcal{P}(
\mathcal{N}_{\mathcal{M}},   \mathcal{H}  )$   the  set   of
numbering-valued measurements given by
\begin{align*}
  \mathcal{P} \left(  \mathcal{N}_{\mathcal{M}}, \mathcal{H}
  \right)      :=       \left\{      \boldsymbol{\pi}      :
  \mathcal{N}_{\mathcal{M}}    \to   \mathcal{L}_+    \left(
  \mathcal{H}    \right)    \Big|    \sum_{\mathbf{n}    \in
    \mathcal{N}_{\mathcal{M}}}    \boldsymbol{\pi}    \left(
  \mathbf{n} \right) = \openone \right\}.
\end{align*}

For  any finite  set $\mathcal{M}$,  any finite  dimensional
Hilbert space $\mathcal{H}$, any ensemble $\boldsymbol{\rho}
\in   \mathcal{E}  (\mathcal{M},   \mathcal{H})$,  and   any
numbering-valued    measurement     $\boldsymbol{\pi}    \in
\mathcal{P}  ( \mathcal{N}_{\mathcal{M}},  \mathcal{H})$, we
denote  with  $p_{\boldsymbol{\rho}, \boldsymbol{\pi}}$  the
probability   distribution   that   the   outcome   of   the
single-measurement strategy  is $\mathbf{n}$ and  the $t$-th
query is correct, that is
\begin{align*}
  &      p_{\boldsymbol{\rho},      \boldsymbol{\pi}}      :
  \mathcal{N}_{\mathcal{M}}  \times  \left\{  1 ,  \dots,  |
  \mathcal{M} | \right\} \to [0, 1]\\ & \left( \mathbf{n}, t
  \right)   \mapsto  \Tr   \left[  \boldsymbol{\pi}   \left(
    \mathbf{n}  \right) \boldsymbol{\rho}  \left( \mathbf{n}
    \left( t \right) \right) \right],
\end{align*}
or, in the quantum circuit formalism,
\begin{align*}
  p_{\boldsymbol{\rho}, \boldsymbol{\pi}} \left( \mathbf{n},
  t \right) = \qquad
  \begin{aligned}
    \Qcircuit @C=.75em @R=1.25em { \lstick{\mathbf{n} \left(
        t      \right)}     &      \cw      &     \cw      &
      \cghost{\begin{matrix}\mathbf{n}\left(t\right)\\=\\\mathbf{n}\left(1\right)\end{matrix}}&
      \ustick{\mathbf{n}  \left(  t  \right)}   \cw  &  &  &
      \cghost{\begin{matrix}\mathbf{n}\left(t\right)\\=\\\mathbf{n}\left(t\right)\end{matrix}}
      &  \ustick{\mathbf{n} \left(  t \right)}  \cw\\ &  & &
      \nghost{\begin{matrix}\mathbf{n}\left(t\right)\\=\\\mathbf{n}\left(1\right)\end{matrix}}
      &  \ustick{\textrm{no}}   \cw  &  \push{\cdots}   &  &
      \cghost{\begin{matrix}\mathbf{n}\left(t\right)\\=\\\mathbf{n}\left(t\right)\end{matrix}}
      &               \ustick{\textrm{yes}}              \cw
      \\ \lstick{\boldsymbol{\rho}  \left( \mathbf{n} \left(
        t \right)  \right)} &  \measureD{\boldsymbol{\pi}} &
      \ustick{\mathbf{n}}                \cw               &
      \cmultigate{-2}{\begin{matrix}\mathbf{n}\left(t\right)\\=\\\mathbf{n}\left(1\right)\end{matrix}}
      &      \ustick{\mathbf{n}}     \cw      &     &      &
      \cmultigate{-2}{\begin{matrix}\mathbf{n}\left(t\right)\\=\\\mathbf{n}\left(t\right)\end{matrix}}
      & \ustick{\mathbf{n}} \cw }
  \end{aligned},
\end{align*}
for any  $\mathbf{n} \in \mathcal{N}_{\mathcal{M}}$  and any
$t \in  \{ 1 ,  \dots, | \mathcal{M}  | \}$. We  denote with
$q_{\boldsymbol{\rho},  \boldsymbol{\pi}}$  the  probability
distribution  that the  $t$-th  guess  is correct,  obtained
marginalizing   $p_{\boldsymbol{\rho},   \boldsymbol{\pi}}$,
that is
\begin{align*}
  &  q_{\boldsymbol{\rho},  \boldsymbol{\pi}} :  \left\{  1,
  \dots, | \mathcal{M} | \right\} \to \left[ 0, 1\right]\\ &
  t \mapsto  \sum_{\mathbf{n} \in \mathcal{N}_{\mathcal{M}}}
  p_{\boldsymbol{\rho}, \boldsymbol{\pi}} \left( \mathbf{n},
  t \right),
\end{align*}

For any finite set $\mathcal{M}$,  any function $\gamma : \{
1, \dots, | \mathcal{M} | \} \to \mathbb{R}$, and any finite
dimensional  Hilbert  space  $\mathcal{H}$, we  denote  with
$G^\gamma$     the     \emph{guesswork}    function     with
single-measurement strategy, that  is, the expectation value
of $\gamma$ given by
\begin{align*}
   & G^{\gamma} : \mathcal{E} \left(\mathcal{M}, \mathcal{H}
  \right)         \times          \mathcal{P}         \left(
  \mathcal{N}_{\mathcal{M}},    \mathcal{H}   \right)    \to
  \mathbb{R}\\ &  \left( \boldsymbol{\rho}, \boldsymbol{\pi}
  \right) \mapsto \sum_{t  = 1}^{\left| \mathcal{M} \right|}
  \gamma    \left(     t    \right)    q_{\boldsymbol{\rho},
    \boldsymbol{\pi}} \left(t\right).
\end{align*}

The  following lemma,  already proved  in Ref.~\cite{HKDW22}
(see  Theorem~1  and  Eq.~(53)   therein),  shows  that  the
single-measurement  strategy  is   equivalent  to  the  most
general sequential instrument strategy.

\begin{lmm}
  For any  finite set $\mathcal{M}$, any  function $\gamma :
  \{ 1, \dots, |\mathcal{M}|  \} \to \mathbb{R}$, any finite
  dimensional   Hilbert  space   $\mathcal{H}_0$,  and   any
  ensemble $\boldsymbol{\rho}  \in \mathcal{E} (\mathcal{M},
  \mathcal{H})$,  the minimum  guesswork $G_{\min}^{\gamma}$
  satisfies
  \begin{align*}
    G_{\min}^{\gamma}  \left(  \boldsymbol{\rho}  \right)  =
    \min_{\boldsymbol{\pi}     \in    \mathcal{P}     \left(
      \mathcal{N}_{\mathcal{M}},    \mathcal{H}_0   \right)}
    G^{\gamma}  \left(  \boldsymbol{\rho},  \boldsymbol{\pi}
    \right).
  \end{align*}
\end{lmm}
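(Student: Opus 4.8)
The plan is to establish the claimed equality by proving the two opposite inequalities, namely $G_{\min}^{\gamma}(\boldsymbol{\rho}) \le \min_{\boldsymbol{\pi}} G^{\gamma}(\boldsymbol{\rho}, \boldsymbol{\pi})$ and $G_{\min}^{\gamma}(\boldsymbol{\rho}) \ge \min_{\boldsymbol{\pi}} G^{\gamma}(\boldsymbol{\rho}, \boldsymbol{\pi})$. Since $G^{\gamma}_{\inst}$ and $G^{\gamma}$ are both affine in the marginal distribution of the time of the first correct guess, it suffices in each direction to exhibit one strategy of the other type whose marginal reproduces the given one; the value of $\gamma$ then enters only through this linearity, which is exactly what lets the argument run for an arbitrary cost function rather than only the identity.

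For the inequality $\le$, I would observe that a numbering-valued measurement is a special case of a sequential-instrument strategy. Given $\boldsymbol{\pi} \in \mathcal{P}(\mathcal{N}_{\mathcal{M}}, \mathcal{H}_0)$, I would choose the ancillary spaces $\mathcal{H}_t$ large enough to carry a classical register recording a numbering, let the first instrument $\boldsymbol{\iota}_1$ perform $\boldsymbol{\pi}$, write its outcome $\mathbf{n}$ into the register and emit $\mathbf{n}(1)$, and let each subsequent $\boldsymbol{\iota}_t$ simply read $\mathbf{n}(t)$ off the register and forward it. This strategy is trace-preserving at every step and never repeats a label, so its outcome string is the numbering $\mathbf{n}$ and $q_{\boldsymbol{\rho}, (\boldsymbol{\iota}_t)_t} = q_{\boldsymbol{\rho}, \boldsymbol{\pi}}$, whence $G_{\min}^{\gamma}(\boldsymbol{\rho}) \le G^{\gamma}(\boldsymbol{\rho}, \boldsymbol{\pi})$; minimizing over $\boldsymbol{\pi}$ gives the claim.

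For the reverse inequality I would turn an arbitrary sequential-instrument strategy into a numbering-valued measurement by passing to the Heisenberg picture, associating to each numbering $\mathbf{n} \in \mathcal{N}_{\mathcal{M}}$ the operator
\begin{align*}
  \boldsymbol{\pi}\left( \mathbf{n} \right) := \left( \boldsymbol{\iota}_1\left( \mathbf{n}\left( 1 \right) \right)^{*} \circ \cdots \circ \boldsymbol{\iota}_{\left| \mathcal{M} \right|}\left( \mathbf{n}\left( \left| \mathcal{M} \right| \right) \right)^{*} \right)\left( \openone \right),
\end{align*}
where $\boldsymbol{\iota}_t(m)^{*}$ is the completely positive Hilbert--Schmidt adjoint of $\boldsymbol{\iota}_t(m)$. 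Each $\boldsymbol{\pi}(\mathbf{n})$ is positive semidefinite, and because $\sum_m \boldsymbol{\iota}_t(m)$ is trace-preserving its adjoint is unital, i.e. $\sum_m \boldsymbol{\iota}_t(m)^{*}(\openone) = \openone$. Iterating this identity lets me both normalize $\boldsymbol{\pi}$ and, by collapsing the instruments acting after the first correct guess, match $q_{\boldsymbol{\rho}, \boldsymbol{\pi}}(t) = q_{\boldsymbol{\rho}, (\boldsymbol{\iota}_t)_t}(t)$ term by term, giving $G^{\gamma}(\boldsymbol{\rho}, \boldsymbol{\pi}) = G^{\gamma}_{\inst}(\boldsymbol{\rho}, (\boldsymbol{\iota}_t)_t)$ and hence $\ge$.

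The step I expect to be the main obstacle is reconciling the free composition above, whose natural index set is the set $\mathcal{M}^{\times \left| \mathcal{M} \right|}$ of \emph{all} outcome strings, with the permutation structure of $\mathcal{N}_{\mathcal{M}}$: a priori the instruments may emit a previously seen label, so the strings contributing to the adjoint sum need not be numberings. Here I would use the freshness clause built into $p_{\boldsymbol{\rho}, (\boldsymbol{\iota}_t)_t}$ — a repeated label makes that term vanish — together with the rule that Bob queries until correct, which forces $\sum_t q_{\boldsymbol{\rho}, (\boldsymbol{\iota}_t)_t}(t) = 1$ and thus confines all the weight relevant to the guesswork to numbering-valued strings. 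Once the repeated-label strings are seen to be irrelevant, the unital collapse described above identifies the surviving strings with $\mathcal{N}_{\mathcal{M}}$, the two marginals coincide exactly, and the argument closes.
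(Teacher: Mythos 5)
Your two-way simulation is the same skeleton as the paper's proof, which likewise asserts one implication per direction: for any instrument tuple there is a numbering-valued measurement reproducing its statistics, and conversely. Your $\le$ direction (run $\boldsymbol{\pi}$ first, store the outcome $\mathbf{n}$ in a classical register, and let $\boldsymbol{\iota}_t$ replay $\mathbf{n}(t)$) is a correct and in fact more explicit rendering of the paper's ``reverse implication'', and your Heisenberg-picture operator $\boldsymbol{\pi}(\mathbf{n}) = \left( \boldsymbol{\iota}_1(\mathbf{n}(1))^{*} \circ \cdots \circ \boldsymbol{\iota}_{|\mathcal{M}|}(\mathbf{n}(|\mathcal{M}|))^{*} \right)(\openone)$ is the natural formalization of the forward implication that the paper leaves implicit.

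The gap is in how you dispose of the non-numbering outcome strings. You claim that the freshness clause, ``together with the rule that Bob queries until correct, forces $\sum_t q_{\boldsymbol{\rho}, (\boldsymbol{\iota}_t)_t}(t) = 1$'' and hence confines all weight to numbering strings. The definitions do not give you this: an instrument tuple is any sequence of CP maps whose sums are trace preserving, and nothing prevents, say, every $\boldsymbol{\iota}_t$ from deterministically emitting the label $1$. For that tuple the freshness clause \emph{annihilates} the repeated terms rather than redistributing them, so $\sum_t q(t) = \Tr[\boldsymbol{\rho}(1)] < 1$; correspondingly your family $\{ \boldsymbol{\pi}(\mathbf{n}) \}$ sums to an operator strictly below $\openone$ (in this example it is identically zero) and is not a measurement, and completing it by assigning $\openone - \sum_{\mathbf{n}} \boldsymbol{\pi}(\mathbf{n})$ to some numbering changes the statistics. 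Note that this same degenerate tuple even violates the claimed equality under a fully literal reading (take $\gamma \equiv 1$: then $G^{\gamma}_{\inst} = \Tr[\boldsymbol{\rho}(1)] < 1 = G^{\gamma}(\boldsymbol{\rho}, \boldsymbol{\pi})$ for every $\boldsymbol{\pi}$), so no appeal to the bare definitions can close this step; an optimality argument is required. This is exactly what the paper's opening move supplies: \emph{without loss of generality the optimal tuple} $(\boldsymbol{\iota}^*_t)_t$ produces each outcome only once, i.e.\ repeating a label is a wasted query and can be replaced by a fresh one without increasing the cost. Since your $\ge$ direction only needs to be applied to an optimal tuple, your Heisenberg construction goes through once you state and justify this WLOG (with some care, since $\gamma$ here is an arbitrary, not necessarily monotone, function); as written, your argument purports to convert \emph{every} tuple, and that stronger statement is false.
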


\begin{proof}
  Notice first  that without loss of  generality the optimal
  $(  \boldsymbol{\epsilon}^*_t  )_t$  is such  that  the  same
  outcome is produced only once, that is
  \begin{align*}
    \Tr   \left[  \boldsymbol{\epsilon}^*_{\left|   \mathcal{M}
        \right|}   \left(   \mathbf{m}_{\left|   \mathcal{M}
        \right|} \right) \dots \boldsymbol{\epsilon}^*_1 \left(
      \mathbf{m}_1    \right)    \boldsymbol{\rho}    \left(
      \boldsymbol{m} \right) \right] = 0
  \end{align*}
  if  there   exists  $t_1   \neq  t_2   \in  \{   1,  \dots
  |\mathcal{M}|   \}$   such    that   $\mathbf{m}_{t_1}   =
  \mathbf{m}_{t_2}$.     Clearly,   for    any   tuple    $(
  \boldsymbol{\epsilon}_t  )_t$  of instruments,  there  exists
  numbering-valued measurement $\boldsymbol{\pi}$ such that
  \begin{align}
    \label{eq:equivalence}
    p_{\boldsymbol{\rho},  \left(  \boldsymbol{\epsilon}_t  \right)_t}
    \Big|_{\mathcal{N}_{\mathcal{M}} \times \left\{ 1 , \dots, |
      \mathcal{M}   |   \right\}}  =   p_{\boldsymbol{\rho},
      \boldsymbol{\pi}}.
  \end{align}
  Since   without  loss   of  generality   the  optimal   $(
  \boldsymbol{\epsilon}^*_t  )_t$ does  not  depend on  Alice's
  feedback, also  the reverse  implication is  true, namely,
  for  any  numbering-valued measurement  $\boldsymbol{\pi}$
  there  exists   tuple  $(  \boldsymbol{\epsilon}_t   )_t$  of
  instruments   such   that  Eq.~\eqref{eq:equivalence}   is
  verified.
\end{proof}

\section{Main result}

\subsection{Arbitrary dimensional case}

The following lemma shows that the minimum guesswork problem
is covariant under  permutations of the ensemble  and of the
cost function.

\begin{lmm}
  \label{lmm:covariance}
  For any  finite set $\mathcal{M}$, any  function $\gamma :
  \{ 1, \dots, |\mathcal{M}|  \} \to \mathbb{R}$, any finite
  dimensional Hilbert space  $\mathcal{H}$, and any ensemble
  $\boldsymbol{\rho}    \in     \mathcal{E}    (\mathcal{M},
  \mathcal{H})$,   if   measurement  $\boldsymbol{\pi}   \in
  \mathcal{P}  (   \mathcal{N}_{\mathcal{M}},  \mathcal{H})$
  attains   the  minimum   guesswork  $G^{\gamma}_{\min}   (
  \boldsymbol{\rho}   )$,  that   is  $G^{\gamma}_{\min}   (
  \boldsymbol{\rho}  )  =  G^{\gamma}  (  \boldsymbol{\rho},
  \boldsymbol{\pi}  )$, then  measurement $\boldsymbol{\pi}'
  \in \mathcal{P} ( \mathcal{N}_{\mathcal{M}}, \mathcal{H})$
  attains the  minimum guesswork  for cost  function $\gamma
  \circ  \sigma_1$  and  ensemble  $\boldsymbol{\rho}  \circ
  \sigma_2$,  that is  $G^{\gamma  \circ \sigma_1}_{\min}  (
  \boldsymbol{\rho}  \circ   \sigma_2)  =   G^{\gamma  \circ
    \sigma_1}    (    \boldsymbol{\rho}   \circ    \sigma_2,
  \boldsymbol{\pi}'    )$,   where    $\boldsymbol{\pi}'   (
  \mathbf{n}   )  :=   \boldsymbol{\pi}  (   \sigma_2  \circ
  \mathbf{n} \circ \sigma_1^{-1} )$  for any $\mathbf{n} \in
  \mathcal{N}_{\mathcal{M}}$,    for     any    permutations
  $\sigma_1$ and $\sigma_2$.
\end{lmm}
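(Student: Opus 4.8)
The plan is to exhibit the map $\boldsymbol{\pi} \mapsto \boldsymbol{\pi}'$, defined by $\boldsymbol{\pi}'(\mathbf{n}) := \boldsymbol{\pi}(\sigma_2 \circ \mathbf{n} \circ \sigma_1^{-1})$, as a bijection of $\mathcal{P}(\mathcal{N}_{\mathcal{M}}, \mathcal{H})$ onto itself under which the objective functional is invariant, in the sense that $G^{\gamma \circ \sigma_1}(\boldsymbol{\rho} \circ \sigma_2, \boldsymbol{\pi}') = G^{\gamma}(\boldsymbol{\rho}, \boldsymbol{\pi})$ for \emph{every} $\boldsymbol{\pi}$. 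Granting this, reindexing the minimization over $\boldsymbol{\pi}'$ through the bijection yields $G^{\gamma \circ \sigma_1}_{\min}(\boldsymbol{\rho} \circ \sigma_2) = G^{\gamma}_{\min}(\boldsymbol{\rho})$, and evaluating the functional at the optimal $\boldsymbol{\pi}$ then shows that $\boldsymbol{\pi}'$ attains this common minimum, which is the assertion.

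First I would verify that $\boldsymbol{\pi}'$ is a legitimate numbering-valued measurement. Since $\sigma_1$ permutes $\{1, \dots, |\mathcal{M}|\}$ and $\sigma_2$ permutes $\mathcal{M}$, the assignment $\mathbf{n} \mapsto \sigma_2 \circ \mathbf{n} \circ \sigma_1^{-1}$ sends a bijection to a bijection, and is itself a bijection of $\mathcal{N}_{\mathcal{M}}$ with inverse $\mathbf{n} \mapsto \sigma_2^{-1} \circ \mathbf{n} \circ \sigma_1$. Consequently $\sum_{\mathbf{n}} \boldsymbol{\pi}'(\mathbf{n}) = \sum_{\mathbf{n}} \boldsymbol{\pi}(\sigma_2 \circ \mathbf{n} \circ \sigma_1^{-1}) = \sum_{\mathbf{n}} \boldsymbol{\pi}(\mathbf{n}) = \openone$, so $\boldsymbol{\pi}' \in \mathcal{P}(\mathcal{N}_{\mathcal{M}}, \mathcal{H})$, and the same relabeling shows $\boldsymbol{\pi} \mapsto \boldsymbol{\pi}'$ is invertible on this set.

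The core is the invariance computation. Unfolding the definitions of $G^{\gamma}$, $q_{\boldsymbol{\rho}, \boldsymbol{\pi}}$ and $p_{\boldsymbol{\rho}, \boldsymbol{\pi}}$ gives $G^{\gamma \circ \sigma_1}(\boldsymbol{\rho} \circ \sigma_2, \boldsymbol{\pi}') = \sum_{\mathbf{n}} \sum_{t} \gamma(\sigma_1(t)) \Tr[\boldsymbol{\pi}(\sigma_2 \circ \mathbf{n} \circ \sigma_1^{-1}) \boldsymbol{\rho}(\sigma_2(\mathbf{n}(t)))]$. I would then change variables to the numbering $\mathbf{n}' := \sigma_2 \circ \mathbf{n} \circ \sigma_1^{-1}$, which ranges over all of $\mathcal{N}_{\mathcal{M}}$ as $\mathbf{n}$ does, and to the query index $s := \sigma_1(t)$. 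The definition of $\boldsymbol{\pi}'$ is chosen precisely so that the two permutations cancel: directly from $\mathbf{n}' = \sigma_2 \circ \mathbf{n} \circ \sigma_1^{-1}$ one has $\mathbf{n}'(\sigma_1(t)) = \sigma_2(\mathbf{n}(t))$, hence $\boldsymbol{\rho}(\sigma_2(\mathbf{n}(t))) = \boldsymbol{\rho}(\mathbf{n}'(s))$, while $\gamma(\sigma_1(t)) = \gamma(s)$ and $\boldsymbol{\pi}(\sigma_2 \circ \mathbf{n} \circ \sigma_1^{-1}) = \boldsymbol{\pi}(\mathbf{n}')$. The sum collapses to $\sum_{\mathbf{n}'} \sum_{s} \gamma(s) \Tr[\boldsymbol{\pi}(\mathbf{n}') \boldsymbol{\rho}(\mathbf{n}'(s))] = G^{\gamma}(\boldsymbol{\rho}, \boldsymbol{\pi})$, as claimed.

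The only delicate point is the bookkeeping of the two independent substitutions, and in particular distinguishing which permutation acts on the domain $\{1, \dots, |\mathcal{M}|\}$ of a numbering versus its codomain $\mathcal{M}$; the relation $\sigma_2 \circ \mathbf{n} = \mathbf{n}' \circ \sigma_1$ is exactly what makes the measurement label and the state label transform compatibly, so that no residual permutation survives the cancellation. Everything else is routine, and optimality of $\boldsymbol{\pi}'$ follows immediately once the two minima are shown equal.
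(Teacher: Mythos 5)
Your proof is correct and follows essentially the same route as the paper: the paper likewise establishes the identity $G^{\gamma \circ \sigma_1}(\boldsymbol{\rho} \circ \sigma_2, \cdot\,) = G^{\gamma}(\boldsymbol{\rho}, \cdot\,)$ under the relabeling $(\mathbf{n}, t) \mapsto (\sigma_2 \circ \mathbf{n} \circ \sigma_1^{-1}, \sigma_1(t))$ via the substitution $t' := \sigma_1(t)$, and then concludes. The only difference is one of explicitness: you spell out the well-definedness of $\boldsymbol{\pi}'$, the bijectivity of $\boldsymbol{\pi} \mapsto \boldsymbol{\pi}'$ on $\mathcal{P}(\mathcal{N}_{\mathcal{M}}, \mathcal{H})$, and the reindexing of the minimization, all of which the paper compresses into ``hence the statement immediately follows.''
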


\begin{proof}
  One has
  \begin{align*}
    &  G^{\gamma  \circ \sigma_1}  \left(  \boldsymbol{\rho}
    \circ \sigma_2, \boldsymbol{\pi} \right)  \\ = & \sum_{t
      = 1}^{\left| \mathcal{M} \right|} \sum_{\mathbf{n} \in
      \mathcal{N}_{\mathcal{M}}}   \gamma   \circ   \sigma_1
    \left(  t  \right)  \Tr \left[  \boldsymbol{\pi}  \left(
      \mathbf{n}  \right)  \boldsymbol{\rho} \circ  \sigma_2
      \left(  \mathbf{n} \left(  t  \right) \right)  \right]
    \\  =  &  \sum_{t'  =  1}^{\left|  \mathcal{M}  \right|}
    \sum_{\mathbf{n}  \in \mathcal{N}_{\mathcal{M}}}  \gamma
    \left(  t' \right)  \Tr  \left[ \boldsymbol{\pi}  \left(
      \mathbf{n}  \right) \boldsymbol{\rho}  \left( \sigma_2
      \circ \mathbf{n} \circ \sigma_1^{-1} \left( t' \right)
      \right) \right]
  \end{align*}
  where the  second equality follows by  the replacement $t'
  :=  \sigma_1  (  t  )$. Hence  the  statement  immediately
  follows.
\end{proof}

For any discrete function $f$, we denote with $\overline{f}$
its average. For any  finite set $\mathcal{M}$, any function
$\gamma : \{  1, \dots, | \mathcal{M} |  \} \to \mathbb{R}$,
any finite-dimensional Hilbert  space $\mathcal{H}$, and any
ensemble  $\boldsymbol{\rho}   \in  \mathcal{E}(\mathcal{M},
\mathcal{H})$,           we            denote           with
$E^{\gamma}_{\boldsymbol{\rho}}  : \mathcal{N}_{\mathcal{M}}
\to \mathcal{L} ( \mathcal{H} )$ the map given by
\begin{align}
  \label{eq:effective}
  E^{\gamma}_{\boldsymbol{\rho}} \left( \mathbf{n} \right) &
  :=  2 \sum_{t  =  1}^{\left|  \mathcal{M} \right|}  \left(
  \gamma   \left(   t    \right)   -   \overline{   \gamma
  } \right)  \boldsymbol{\rho} \left( \mathbf{n}
  \left( t \right) \right),
\end{align}
for any $\mathbf{n}  \in \mathcal{N}_{\mathcal{M}}$.

The   following  lemma   expresses   the   guesswork  as   a
discrimination           problem           for           the
$E^{\gamma}_{\boldsymbol{\rho}}$'s operators, thus providing
an interpretation for such operators.

\begin{lmm}
  \label{lmm:gw}
  For any  finite set $\mathcal{M}$, any  function $\gamma :
  \{ 1, \dots, |\mathcal{M}|  \} \to \mathbb{R}$, any finite
  dimensional  Hilbert  space  $\mathcal{H}$,  any  ensemble
  $\boldsymbol{\rho}    \in     \mathcal{E}    (\mathcal{M},
  \mathcal{H})$,   and   any  numbering-valued   measurement
  $\boldsymbol{\pi}         \in        \mathcal{P}         (
  \mathcal{N}_{\mathcal{M}},  \mathcal{H})$,  the  guesswork
  $G^{\gamma}  (  \boldsymbol{\rho},  \boldsymbol{\pi})$  is
  given by
  \begin{align*}
    G^{\gamma}  \left(  \boldsymbol{\rho},  \boldsymbol{\pi}
    \right) = \overline{ \gamma } + \frac12 \sum_{\mathbf{n}
      \in     \mathcal{N}_{\mathcal{M}}}      \Tr     \left[
      \boldsymbol{\pi}     \left(     \mathbf{n}     \right)
      E^{\gamma}_{\boldsymbol{\rho}}    \left(    \mathbf{n}
      \right) \right].
  \end{align*}
\end{lmm}

\begin{proof}
  By the  linearity of  map $E^{\gamma}_{\boldsymbol{\rho}}$
  one has
  \begin{align*}
    & G^{\gamma}  \left( \boldsymbol{\rho}, \boldsymbol{\pi}
    \right) \\ = & \sum_{t = 1}^{\left| \mathcal{M} \right|}
    \sum_{\mathbf{n}  \in \mathcal{N}_{\mathcal{M}}}  \gamma
    \left(  t  \right)  \Tr \left[  \boldsymbol{\pi}  \left(
      \mathbf{n} \right) \boldsymbol{\rho} \left( \mathbf{n}
      \left(    t   \right)    \right)    \right]\\   =    &
    \sum_{\mathbf{n}   \in  \mathcal{N}_{\mathcal{M}}}   \Tr
    \left[   \boldsymbol{\pi}   \left(  \mathbf{n}   \right)
      \sum_{t  =  1}^{\left|   \mathcal{M}  \right|}  \left(
      \gamma  \left(  t   \right)  \boldsymbol{\rho}  \left(
      \mathbf{n}   \left(   t    \right)   \right)   \right)
      \right]\\      =      &      \sum_{\mathbf{n}      \in
      \mathcal{N}_{\mathcal{M}}} \Tr \left[ \boldsymbol{\pi}
      \left(        \mathbf{n}         \right)        \left(
      \frac{E^{\gamma}_{\boldsymbol{\rho}} \left( \mathbf{n}
        \right)}2  +  \overline{   \gamma  }  \;  \overline{
        \boldsymbol{\rho} } \right) \right],
  \end{align*}
  from which the statement immediately follows.
\end{proof}

We say  that function  $\gamma$ is balanced  if and  only if
there exists permutation $\sigma_{\gamma}$ such that
\begin{align}
  \label{eq:balanced}
  \frac{\gamma  +  \gamma  \circ \sigma_{\gamma}^{-1}  }2  =
  \overline{ \gamma }.
\end{align}
The identity cost function is recovered for $\sigma_{\gamma}
:= ( |\mathcal{M}|, \dots, 1)$.

\begin{lmm}
  \label{lmm:balanced}
  For any  finite set  $\mathcal{M}$, any  balanced function
  $\gamma :  \{ 1, \dots, |\mathcal{M}|  \} \to \mathbb{R}$,
  any  finite dimensional  Hilbert space  $\mathcal{H}$, and
  any    ensemble    $\boldsymbol{\rho}   \in    \mathcal{E}
  (\mathcal{M},  \mathcal{H})$, one has
  \begin{align*}
    E^{\gamma}_{\boldsymbol{\rho}} \left( \mathbf{n} \right) =
    -E^{\gamma}_{\boldsymbol{\rho}}  \left(  \mathbf{n}  \circ
    \sigma_{\gamma} \right),
  \end{align*}
  for any $\mathbf{n} \in \mathcal{N}_{\mathcal{M}}$.
\end{lmm}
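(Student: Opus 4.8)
The plan is to compute $E^{\gamma}_{\boldsymbol{\rho}}(\mathbf{n} \circ \sigma_{\gamma})$ directly from its definition in Eq.~\eqref{eq:effective} and show that it reduces to $-E^{\gamma}_{\boldsymbol{\rho}}(\mathbf{n})$ via a single reindexing of the sum combined with the balancedness hypothesis Eq.~\eqref{eq:balanced}. First I would unfold the definition, using that $(\mathbf{n} \circ \sigma_{\gamma})(t) = \mathbf{n}(\sigma_{\gamma}(t))$, to obtain
\begin{align*}
  E^{\gamma}_{\boldsymbol{\rho}} \left( \mathbf{n} \circ \sigma_{\gamma} \right) = 2 \sum_{t = 1}^{\left| \mathcal{M} \right|} \left( \gamma \left( t \right) - \overline{\gamma} \right) \boldsymbol{\rho} \left( \mathbf{n} \left( \sigma_{\gamma} \left( t \right) \right) \right).
\end{align*}

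The crucial step is the change of summation variable $s := \sigma_{\gamma}(t)$, equivalently $t = \sigma_{\gamma}^{-1}(s)$. Since $\sigma_{\gamma}$ is a bijection of $\{ 1, \dots, |\mathcal{M}| \}$, the index $s$ ranges over the same set, the operator factor becomes $\boldsymbol{\rho}(\mathbf{n}(s))$, and the scalar coefficient becomes $(\gamma \circ \sigma_{\gamma}^{-1})(s) - \overline{\gamma}$, so that
\begin{align*}
  E^{\gamma}_{\boldsymbol{\rho}} \left( \mathbf{n} \circ \sigma_{\gamma} \right) = 2 \sum_{s = 1}^{\left| \mathcal{M} \right|} \left( \gamma \circ \sigma_{\gamma}^{-1} \left( s \right) - \overline{\gamma} \right) \boldsymbol{\rho} \left( \mathbf{n} \left( s \right) \right).
\end{align*}
I would then invoke Eq.~\eqref{eq:balanced}, which rearranges to $\gamma \circ \sigma_{\gamma}^{-1} = 2\overline{\gamma} - \gamma$, whence the coefficient simplifies pointwise to $(\gamma \circ \sigma_{\gamma}^{-1})(s) - \overline{\gamma} = -(\gamma(s) - \overline{\gamma})$. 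Pulling the overall sign out of the sum and comparing with Eq.~\eqref{eq:effective} evaluated at $\mathbf{n}$ yields $E^{\gamma}_{\boldsymbol{\rho}}(\mathbf{n} \circ \sigma_{\gamma}) = -E^{\gamma}_{\boldsymbol{\rho}}(\mathbf{n})$, which is the claim.

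The argument involves no genuine obstacle; it is a short computation whose only delicate point is the direction of the reindexing. Because $\sigma_{\gamma}$ acts on the argument $t$ of the numbering, the substitution $s = \sigma_{\gamma}(t)$ produces the factor $\gamma \circ \sigma_{\gamma}^{-1}$ rather than $\gamma \circ \sigma_{\gamma}$, and it is precisely this combination that the balancedness condition Eq.~\eqref{eq:balanced} is phrased to control. Getting this composition order right is the one place where care is required; everything else is routine manipulation of the finite sum and the linearity already exploited in the proof of Lemma~\ref{lmm:gw}.
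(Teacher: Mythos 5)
Your proof is correct and is essentially the paper's own argument run in reverse: the paper starts from $E^{\gamma}_{\boldsymbol{\rho}}(\mathbf{n})$, applies Eq.~\eqref{eq:balanced} first and then substitutes $t' := \sigma_{\gamma}^{-1}(t)$, whereas you start from $E^{\gamma}_{\boldsymbol{\rho}}(\mathbf{n} \circ \sigma_{\gamma})$, reindex first and then invoke balancedness, which is the same chain of equalities read in the opposite order. You also correctly identified the one delicate point (that the substitution produces $\gamma \circ \sigma_{\gamma}^{-1}$, exactly the combination controlled by Eq.~\eqref{eq:balanced}), matching the paper's use of these two ingredients.
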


\begin{proof}
  One has
  \begin{align*}
    &   E_{\boldsymbol{\rho}}^{\gamma}   \left(   \mathbf{n}
    \right)  \\  =  &  2 \sum_{t  =  1}^{\left|  \mathcal{M}
      \right|}  \left( \overline{  \gamma }  - \gamma  \circ
    \sigma_{\gamma}^{-1}    \left(    t   \right)    \right)
    \boldsymbol{\rho}  \left(  \mathbf{n} \left(  t  \right)
    \right)\\  =  &  2  \sum_{t'  =  1}^{\left|  \mathcal{M}
      \right|} \left( \overline{ \gamma } - \gamma \left( t'
    \right)  \right)   \boldsymbol{\rho}  \left(  \mathbf{n}
    \circ \sigma_{\gamma} \left( t'  \right) \right)\\ = & -
    E_{\boldsymbol{\rho}}^{\gamma}  \left( \mathbf{n}  \circ
    \sigma_{\gamma} \right),
  \end{align*}
  where     the     first      equality     follows     from
  Eq.~\eqref{eq:balanced},  the second  equality follows  by
  substituting  $t' :=  \sigma_{\gamma}^{-1} (  t )$,  and the  third
  equality follows from Eq.~\eqref{eq:effective}.
\end{proof}

We denote with  $\Pi_- ( \cdot )$ and $\Pi_0  ( \cdot )$ the
projectors on  the negative and  null parts of $(  \cdot )$,
respectively. For any finite set $\mathcal{M}$, any balanced
function  $\gamma :  \{ 1,  \dots,  | \mathcal{M}  | \}  \to
\mathbb{R}$,    any    finite-dimensional   Hilbert    space
$\mathcal{H}$,    any   ensemble    $\boldsymbol{\rho}   \in
\mathcal{E}(\mathcal{M},     \mathcal{H})$,      and     any
$\mathbf{n}^*  \in  \mathcal{N}_{ \mathcal{M}}$,  we  denote
with          $\boldsymbol{\pi}^{\gamma}_{\boldsymbol{\rho},
  \mathbf{n}^*}$ the numbering-valued measurement given by
\begin{align*}
  \boldsymbol{\pi}^{\gamma}_{\boldsymbol{\rho},
    \mathbf{n}^*} \left( \mathbf{n} \right) :=
  \begin{cases}
    \left(     \Pi_-     +     \frac12     \Pi_0     \right)
    E^{\gamma}_{\boldsymbol{\rho}} \left( \mathbf{n} \right)
    &   \textrm{if  }   \mathbf{n}   \in  \{   \mathbf{n}^*,
    \mathbf{n}^*   \circ    \sigma_{\gamma}   \},\\    0   &
    \textrm{otherwise},
  \end{cases}
\end{align*}
for any $\mathbf{n}  \in \mathcal{N}_{ \mathcal{M}}$ (notice
that   this   is   a   well-defined   measurement   due   to
Lemma~\eqref{lmm:balanced}).

The following theorem provides the minimum guesswork under a
finite  set of  conditions, thus  generalizing Theorem~1  of
Ref.~\cite{DBK22} to the case of balanced cost functions.

\begin{thm}
  \label{thm:helstrom}
  For any  finite set  $\mathcal{M}$, any  balanced function
  $\gamma : \{ 1, \dots, | \mathcal{M} | \} \to \mathbb{R}$,
  any  finite dimensional  Hilbert space  $\mathcal{H}$, and
  any   ensemble   $\boldsymbol{\rho}  \in   \mathcal{E}   (
  \mathcal{M},  \mathcal{H} )$,  if  there exists  numbering
  $\mathbf{n}^* \in \mathcal{N} ( \mathcal{M} )$ such that
  \begin{align}
    \label{eq:condition}
    \left|       E^{\gamma}_{\boldsymbol{\rho}}       \left(
    \mathbf{n}^*         \right)         \right|         \ge
    E^{\gamma}_{\boldsymbol{\rho}}     \left(     \mathbf{n}
    \right),
  \end{align}
  for any  $\mathbf{n} \in  \mathcal{N}_{\mathcal{M}}$, then
  numbering-valued                               measurement
  $\boldsymbol{\pi}_{\boldsymbol{\rho},   \mathbf{n}^*}  \in
  \mathcal{P}  (  \mathcal{N}_{\mathcal{M}}, \mathcal{H}  )$
  minimizes  the  guesswork,  that is  $G^{\gamma}_{\min}  (
  \boldsymbol{\rho}  )  =  G^{\gamma}  (  \boldsymbol{\rho},
  \boldsymbol{\pi}_{\boldsymbol{\rho},   \mathbf{n}^*}   )$,
  with
  \begin{align}
    \label{eq:gwmin}
    G^{\gamma}           \left(           \boldsymbol{\rho},
    \boldsymbol{\pi}_{\boldsymbol{\rho},       \mathbf{n}^*}
    \right)   =   \overline{\gamma}    -   \frac12   \left\|
    E^{\gamma}_{\boldsymbol{\rho}}    \left(    \mathbf{n}^*
    \right) \right\|_1.
  \end{align}
\end{thm}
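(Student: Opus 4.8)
The plan is to establish the optimal value by proving two matching bounds: achievability, that the stated measurement $\boldsymbol{\pi}_{\boldsymbol{\rho}, \mathbf{n}^*}$ realizes the value in Eq.~\eqref{eq:gwmin}, and optimality, that $\overline{\gamma} - \frac12 \| E^{\gamma}_{\boldsymbol{\rho}}(\mathbf{n}^*) \|_1$ lower-bounds $G^{\gamma}(\boldsymbol{\rho}, \boldsymbol{\pi})$ for every $\boldsymbol{\pi} \in \mathcal{P}(\mathcal{N}_{\mathcal{M}}, \mathcal{H})$. Both parts run entirely through Lemma~\ref{lmm:gw}, which rewrites the guesswork as $\overline{\gamma} + \frac12 \sum_{\mathbf{n}} \Tr[\boldsymbol{\pi}(\mathbf{n}) E^{\gamma}_{\boldsymbol{\rho}}(\mathbf{n})]$, so the whole argument is phrased in terms of the operators $E^{\gamma}_{\boldsymbol{\rho}}$. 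Combining the two bounds with the single-measurement equivalence lemma then identifies the common value with $G^{\gamma}_{\min}(\boldsymbol{\rho})$.

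For achievability I would insert the explicit form of $\boldsymbol{\pi}_{\boldsymbol{\rho}, \mathbf{n}^*}$ into the Lemma~\ref{lmm:gw} expression. Only the numberings $\mathbf{n}^*$ and $\mathbf{n}^* \circ \sigma_{\gamma}$ carry nonzero weight, and Lemma~\ref{lmm:balanced} gives $E^{\gamma}_{\boldsymbol{\rho}}(\mathbf{n}^* \circ \sigma_{\gamma}) = -E^{\gamma}_{\boldsymbol{\rho}}(\mathbf{n}^*)$, so that $\Pi_-$ applied to the second operator selects the positive eigenspace of $E^{\gamma}_{\boldsymbol{\rho}}(\mathbf{n}^*)$ while $\Pi_0$ is shared. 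Since $\Tr[\Pi_0(A) A] = 0$, the two surviving traces reduce to the sum of the negative eigenvalues of $E^{\gamma}_{\boldsymbol{\rho}}(\mathbf{n}^*)$ and to minus the sum of its positive eigenvalues, which together give $-\| E^{\gamma}_{\boldsymbol{\rho}}(\mathbf{n}^*) \|_1$; substituting back into Lemma~\ref{lmm:gw} yields Eq.~\eqref{eq:gwmin}.

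For optimality the decisive move is to turn the hypothesis Eq.~\eqref{eq:condition}, which is an \emph{upper} bound $E^{\gamma}_{\boldsymbol{\rho}}(\mathbf{n}) \le | E^{\gamma}_{\boldsymbol{\rho}}(\mathbf{n}^*) |$, into a matching \emph{lower} bound. Applying Eq.~\eqref{eq:condition} at the numbering $\mathbf{n} \circ \sigma_{\gamma}$ and invoking Lemma~\ref{lmm:balanced} to replace $E^{\gamma}_{\boldsymbol{\rho}}(\mathbf{n} \circ \sigma_{\gamma})$ by $-E^{\gamma}_{\boldsymbol{\rho}}(\mathbf{n})$ gives $E^{\gamma}_{\boldsymbol{\rho}}(\mathbf{n}) \ge -| E^{\gamma}_{\boldsymbol{\rho}}(\mathbf{n}^*) |$ for every $\mathbf{n}$. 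Because each $\boldsymbol{\pi}(\mathbf{n})$ is positive semidefinite, this operator inequality yields $\Tr[\boldsymbol{\pi}(\mathbf{n}) E^{\gamma}_{\boldsymbol{\rho}}(\mathbf{n})] \ge -\Tr[\boldsymbol{\pi}(\mathbf{n}) | E^{\gamma}_{\boldsymbol{\rho}}(\mathbf{n}^*) |]$, and summing over $\mathbf{n}$ while using $\sum_{\mathbf{n}} \boldsymbol{\pi}(\mathbf{n}) = \openone$ collapses the right-hand side to $-\Tr[ | E^{\gamma}_{\boldsymbol{\rho}}(\mathbf{n}^*) | ] = -\| E^{\gamma}_{\boldsymbol{\rho}}(\mathbf{n}^*) \|_1$. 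Through Lemma~\ref{lmm:gw} this is exactly $G^{\gamma}(\boldsymbol{\rho}, \boldsymbol{\pi}) \ge \overline{\gamma} - \frac12 \| E^{\gamma}_{\boldsymbol{\rho}}(\mathbf{n}^*) \|_1$ for all $\boldsymbol{\pi}$.

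The crux is precisely this sign-flip: the hypothesis alone bounds $E^{\gamma}_{\boldsymbol{\rho}}(\mathbf{n})$ from above, which is useless against a positive $\boldsymbol{\pi}(\mathbf{n})$, and it is the balanced structure (Lemma~\ref{lmm:balanced}) that converts it into the lower bound and, more importantly, lets the summed estimate telescope against the resolution of identity rather than against the far looser sum of the individual negative parts of the $E^{\gamma}_{\boldsymbol{\rho}}(\mathbf{n})$. I would therefore be careful to note that $\mathbf{n} \mapsto \mathbf{n} \circ \sigma_{\gamma}$ is a bijection of $\mathcal{N}_{\mathcal{M}}$, so that Eq.~\eqref{eq:condition} may legitimately be evaluated at $\mathbf{n} \circ \sigma_{\gamma}$ for the given $\mathbf{n}$.
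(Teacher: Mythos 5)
Your proof is correct and follows essentially the same route as the paper: achievability by direct computation of the two-outcome measurement via Lemma~\ref{lmm:gw}, and optimality by combining Lemma~\ref{lmm:balanced} with Eq.~\eqref{eq:condition}, the positivity of each $\boldsymbol{\pi}(\mathbf{n})$, and the completeness relation. The only cosmetic difference is that the paper uses the balanced symmetry to rewrite the minimization as $-\max_{\boldsymbol{\pi}}$ and then applies Eq.~\eqref{eq:condition} as an upper bound, whereas you apply the same symmetry pointwise to get the two-sided operator bound $-\left| E^{\gamma}_{\boldsymbol{\rho}} \left( \mathbf{n}^* \right) \right| \le E^{\gamma}_{\boldsymbol{\rho}} \left( \mathbf{n} \right) \le \left| E^{\gamma}_{\boldsymbol{\rho}} \left( \mathbf{n}^* \right) \right|$ and bound term by term.
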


\begin{proof}
  Equation~\eqref{eq:gwmin}  immediately  follows by  direct
  computation using Lemma~\ref{lmm:gw}. One has
  \begin{align*}
    &  G^{\gamma}_{\min}  \left(  \boldsymbol{\rho}  \right)
    \\     =      &     \overline{\gamma}      +     \frac12
    \min_{\boldsymbol{\pi}     \in    \mathcal{P}     \left(
      \mathcal{N}_{\mathcal{M}},     \mathcal{H}    \right)}
    \sum_{\mathbf{n}   \in  \mathcal{N}_{\mathcal{M}}}   \Tr
    \left[ E^{\gamma}_{\boldsymbol{\rho}}  \left( \mathbf{n}
      \right)  \boldsymbol{\pi}  \left(  \mathbf{n}  \right)
      \right]   \\   =   &   \overline{\gamma}   -   \frac12
    \max_{\boldsymbol{\pi}     \in    \mathcal{P}     \left(
      \mathcal{N}_{\mathcal{M}},     \mathcal{H}    \right)}
    \sum_{\mathbf{n}   \in  \mathcal{N}_{\mathcal{M}}}   \Tr
    \left[ E^{\gamma}_{\boldsymbol{\rho}}  \left( \mathbf{n}
      \right)  \boldsymbol{\pi}  \left(  \mathbf{n}  \right)
      \right]   \\  \ge   &   \overline{\gamma}  -   \frac12
    \max_{\boldsymbol{\pi}     \in    \mathcal{P}     \left(
      \mathcal{N}_{\mathcal{M}},     \mathcal{H}    \right)}
    \sum_{\mathbf{n}   \in  \mathcal{N}_{\mathcal{M}}}   \Tr
    \left[   \left|  E^{\gamma}_{\boldsymbol{\rho}}   \left(
      \mathbf{n}^*  \right) \right|  \boldsymbol{\pi} \left(
      \mathbf{n} \right) \right] \\  = & \overline{\gamma} -
    \frac12  \left\|  E^{\gamma}_{\boldsymbol{\rho}}  \left(
    \mathbf{n}^* \right) \right\|_1,
  \end{align*}
  where the first  equality follows from Lemma~\ref{lmm:gw},
  the  second  equality  follows  from  the  fact  that,  by
  Lemma~\ref{lmm:balanced},    one    has    $\sum_{n    \in
    \mathcal{N}_{\mathcal{M}}}             \Tr             [
    E_{\boldsymbol{\rho}}^{\gamma}     (    \mathbf{n}     )
    \boldsymbol{\pi}  (  \mathbf{n}  )]   =  -  \sum_{n  \in
    \mathcal{N}_{\mathcal{M}}}             \Tr             [
    E_{\boldsymbol{\rho}}^{\gamma}     (    \mathbf{n}     )
    \boldsymbol{\pi}        (        \mathbf{n}        \circ
    \sigma_{\gamma}^{-1})]$,  the  inequality  follows  from
  Eq.~\eqref{eq:condition}   and   the   non-negativity   of
  $\boldsymbol{\pi} ( \mathbf{n} )$  for any $\mathbf{n} \in
  \mathcal{N}_{\mathcal{M}}$, and the final equality follows
  by direct  computation. Hence, the statement  follows from
  Eq.~\eqref{eq:gwmin}.
\end{proof}

Therefore, given  the task of  computing the guesswork  of a
given  ensemble  $\boldsymbol{\rho}$,   one  can  check  the
conditions in Eq.~\eqref{eq:condition} that, being finite in
number, can be checked in  finite time. If they are verified
for  some  numbering  $\mathbf{n}^*$,  the  numbering-valued
measurement            $\boldsymbol{\pi}_{\boldsymbol{\rho},
  \mathbf{n}^*}$   corresponds   to  the   optimal   quantum
strategy,  and  the  value  of the  guesswork  is  given  by
Eq.~\eqref{eq:gwmin}. An  even more concrete  application of
Theorem~\ref{thm:helstrom} is given in  the next section for
the qubit case.

\subsection{The qubit case}

For any function $f : \{ 1, \dots, M \} \to K$, where $K$ is
a linear space,  we denote with $\gram (f)$  the Gram matrix
whose  element $i,  j$ is  $f(i) \cdot  f(j)$.  For  any set
$\mathcal{M}  =  \{  1,  \dots, |\mathcal{M}|  \}$  and  any
numbering  $\mathbf{n}  \in  \mathcal{N}_{\mathcal{M}}$,  we
denote  with  $X_{\mathbf{n}}$   the  $|\mathcal{M}|  \times
|\mathcal{M}|$ permutation  matrix whose  element $i,  j$ is
$\delta_{i,      \mathbf{n}     (j)}$.       Notice     that
$X_{\mathcal{N}_{\mathcal{M}}}$   is   the    set   of   all
$|\mathcal{M}|     \times     |\mathcal{M}|$     permutation
matrices.    For   any    two-dimensional   Hilbert    space
$\mathcal{H}$, let $\mathbf{v}$ be the Pauli vector function
given by
\begin{align*}
  & \mathbf{v} : \mathcal{L}  \left( \mathcal{H} \right) \to
  \mathbb{R}^3\\ &  A \mapsto  \left( \Tr \left[  A \sigma_k
    \right] \right)_{k = 1}^3\nonumber,
\end{align*}
for  some orthonormal  basis $(  \sigma_k )_{k=1}^3$  in the
traceless subspace of $\mathcal{L} ( \mathcal{H} )$ equipped
with the Hilbert-Schmidt product.

In  the  qubit  case,  the following  corollary  provides  a
closed-form solution to the  guesswork problem (a continuous
optimization problem)  by proving  its equivalence  with the
quadratic assignment problem (an  optimization over a finite
set).

\begin{cor}
  \label{cor:qubit}
  For any  set $\mathcal{M}  = \{  1, \dots,  |\mathcal{M} |
  \}$,  any  balanced function  $\gamma  :  \{ 1,  \dots,  |
  \mathcal{M}  | \}  \to  \mathbb{R}$,  any two  dimensional
  Hilbert    space   $\mathcal{H}$,    and   any    ensemble
  $\boldsymbol{\rho}   \in    \mathcal{E}   (   \mathcal{M},
  \mathcal{H}   )$   such   that   the   prior   probability
  distribution   $\Tr[  \boldsymbol{\rho}   (  \cdot   )]  =
  |\mathcal{M}|^{-1}$    is    uniform,   the    measurement
  $\boldsymbol{\pi}^{\gamma}_{\boldsymbol{\rho},
    \mathbf{n}^*}$    attains    the    minimum    guesswork
  $G^{\gamma}_{\min}   (  \boldsymbol{\rho}   )$,  that   is
  $G^{\gamma}_{\min}  ( \boldsymbol{\rho}  ) =  G^{\gamma} (
  \boldsymbol{\rho},
  \boldsymbol{\pi}^{\gamma}_{\mathbf{n}^*}     )$,     where
  $\mathbf{n}^*$ is the solution  of the following quadratic
  assignment problem
  \begin{align*}
    \mathbf{n}^*       =       \argmax_{\mathbf{n}       \in
      \mathcal{N}_{\mathcal{M}}}   \Tr  \left[\gram   \left(
      \gamma  -   \overline{\gamma}  \right)  X_{\mathbf{n}}
      \gram   \left(   \mathbf{v}  \circ   \boldsymbol{\rho}
      \right) X_{\mathbf{n}}^T \right],
  \end{align*}
  and one has
  \begin{align*}
    G^{\gamma}           \left(           \boldsymbol{\rho},
    \boldsymbol{\pi}_{\boldsymbol{\rho},       \mathbf{n}^*}
    \right) = \overline{\gamma}  - \frac12 \left| \mathbf{v}
    \left(       E^{\gamma}_{\boldsymbol{\rho}}       \left(
    \mathbf{n}^* \right) \right) \right|_2,
  \end{align*}
  where $| \cdot |$ denotes the $2$-norm.
\end{cor}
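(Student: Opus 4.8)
\emph{Proof plan.} The plan is to obtain this corollary as the qubit specialization of Theorem~\ref{thm:helstrom}, the payoff being that in two dimensions the operator hypothesis Eq.~\eqref{eq:condition} becomes automatically satisfiable by a finite search. First I would record the consequence of the uniform-prior assumption: since $\sum_{t=1}^{|\mathcal{M}|}(\gamma(t)-\overline{\gamma})=0$ and $\Tr[\boldsymbol{\rho}(\mathbf{n}(t))]=|\mathcal{M}|^{-1}$ for every $t$, taking the trace of Eq.~\eqref{eq:effective} yields $\Tr[E^{\gamma}_{\boldsymbol{\rho}}(\mathbf{n})]=0$ for every $\mathbf{n}\in\mathcal{N}_{\mathcal{M}}$. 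Thus each $E^{\gamma}_{\boldsymbol{\rho}}(\mathbf{n})$ is a \emph{traceless} Hermitian qubit operator.

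Second, I would exploit the qubit geometry. A traceless Hermitian qubit operator $A$ has eigenvalues $\pm\tfrac12|\mathbf{v}(A)|_2$, so its operator absolute value is the scalar $|A|=\tfrac12|\mathbf{v}(A)|_2\,\openone$ and its trace norm is $\|A\|_1=|\mathbf{v}(A)|_2$. Applying this to $A=E^{\gamma}_{\boldsymbol{\rho}}(\mathbf{n}^*)$, the hypothesis Eq.~\eqref{eq:condition} collapses from an operator inequality to the scalar inequality $\tfrac12|\mathbf{v}(E^{\gamma}_{\boldsymbol{\rho}}(\mathbf{n}^*))|_2\,\openone\ge E^{\gamma}_{\boldsymbol{\rho}}(\mathbf{n})$, i.e. $|\mathbf{v}(E^{\gamma}_{\boldsymbol{\rho}}(\mathbf{n}^*))|_2\ge|\mathbf{v}(E^{\gamma}_{\boldsymbol{\rho}}(\mathbf{n}))|_2$ for all $\mathbf{n}$ (the right-hand side being twice the largest eigenvalue of $E^{\gamma}_{\boldsymbol{\rho}}(\mathbf{n})$). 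This is the crux, and the main obstacle: choosing $\mathbf{n}^*$ to maximize $|\mathbf{v}(E^{\gamma}_{\boldsymbol{\rho}}(\mathbf{n}))|_2$ over the finite set $\mathcal{N}_{\mathcal{M}}$ always verifies Eq.~\eqref{eq:condition}, so Theorem~\ref{thm:helstrom} applies unconditionally in the qubit case. It is precisely tracelessness forcing $|E^{\gamma}_{\boldsymbol{\rho}}(\mathbf{n}^*)|$ to be a multiple of the identity that turns the continuous optimization of Lemma~\ref{lmm:gw} into a finite one.

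Third, I would rewrite the objective $|\mathbf{v}(E^{\gamma}_{\boldsymbol{\rho}}(\mathbf{n}))|_2^2$ in quadratic-assignment form. By linearity of $\mathbf{v}$ and Eq.~\eqref{eq:effective}, $\mathbf{v}(E^{\gamma}_{\boldsymbol{\rho}}(\mathbf{n}))=2\sum_t(\gamma(t)-\overline{\gamma})\,\mathbf{v}(\boldsymbol{\rho}(\mathbf{n}(t)))$, so that $|\mathbf{v}(E^{\gamma}_{\boldsymbol{\rho}}(\mathbf{n}))|_2^2=4\sum_{t,s}(\gamma(t)-\overline{\gamma})(\gamma(s)-\overline{\gamma})\,\mathbf{v}(\boldsymbol{\rho}(\mathbf{n}(t)))\cdot\mathbf{v}(\boldsymbol{\rho}(\mathbf{n}(s)))$. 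Recognizing $(\gamma(t)-\overline{\gamma})(\gamma(s)-\overline{\gamma})=\gram(\gamma-\overline{\gamma})_{t,s}$ and $\mathbf{v}(\boldsymbol{\rho}(\mathbf{n}(t)))\cdot\mathbf{v}(\boldsymbol{\rho}(\mathbf{n}(s)))=\gram(\mathbf{v}\circ\boldsymbol{\rho})_{\mathbf{n}(t),\mathbf{n}(s)}=(X_{\mathbf{n}}^T\gram(\mathbf{v}\circ\boldsymbol{\rho})X_{\mathbf{n}})_{t,s}$, and using $\sum_{t,s}A_{t,s}B_{t,s}=\Tr[AB^T]$ for the symmetric matrices involved, the sum equals $4\,\Tr[\gram(\gamma-\overline{\gamma})X_{\mathbf{n}}^T\gram(\mathbf{v}\circ\boldsymbol{\rho})X_{\mathbf{n}}]$. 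Maximizing over $\mathbf{n}$ drops the positive factor, and relabeling $\mathbf{n}\mapsto\mathbf{n}^{-1}$ (which uses $X_{\mathbf{n}^{-1}}=X_{\mathbf{n}}^T$ and merely permutes $\mathcal{N}_{\mathcal{M}}$) brings it to exactly the stated argmax; the placement of $X_{\mathbf{n}}$ versus $X_{\mathbf{n}}^T$ is thus immaterial to the optimal value.

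Finally, with $\mathbf{n}^*$ the quadratic-assignment maximizer, Theorem~\ref{thm:helstrom} gives $G^{\gamma}_{\min}(\boldsymbol{\rho})=G^{\gamma}(\boldsymbol{\rho},\boldsymbol{\pi}^{\gamma}_{\boldsymbol{\rho},\mathbf{n}^*})=\overline{\gamma}-\tfrac12\|E^{\gamma}_{\boldsymbol{\rho}}(\mathbf{n}^*)\|_1$ via Eq.~\eqref{eq:gwmin}, and substituting the norm identity $\|E^{\gamma}_{\boldsymbol{\rho}}(\mathbf{n}^*)\|_1=|\mathbf{v}(E^{\gamma}_{\boldsymbol{\rho}}(\mathbf{n}^*))|_2$ from the second step closes the claim.
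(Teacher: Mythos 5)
Your proposal is correct and follows essentially the same route as the paper's proof: the uniform prior forces $\Tr[E^{\gamma}_{\boldsymbol{\rho}}(\mathbf{n})]=0$, tracelessness in dimension two makes $|E^{\gamma}_{\boldsymbol{\rho}}(\mathbf{n})|$ a multiple of the identity (so the range is totally ordered and Eq.~\eqref{eq:condition} is satisfiable by a finite maximization), Theorem~\ref{thm:helstrom} then applies, and the squared Bloch norm expands into the stated quadratic-assignment objective. The only difference is presentational: you spell out the Gram-matrix expansion and the $X_{\mathbf{n}}$ versus $X_{\mathbf{n}}^T$ bookkeeping (correctly noting that the maximizer is only pinned down up to inversion of the numbering, a detail the paper's ``by explicit computation'' glosses over), so your write-up is, if anything, slightly more careful than the original.
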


\begin{proof}
  Since by hypothesis $\Tr [ \boldsymbol{\rho} ( \cdot ) ] =
  | \mathcal{M} |^{-1}$, by explicit computation one has
  \begin{align*}
    \Tr    \left[   E^{\gamma}_{\boldsymbol{\rho}}    \left(
      \mathbf{n} \right) \right] = 0,
  \end{align*}
  for   any   $\mathbf{n}  \in   \mathcal{N}_{\mathcal{M}}$.
  Hence,    since    by    hypothesis    $\mathcal{H}$    is
  two-dimensional, one has
  \begin{align*}
    \left| E^{\gamma}_{\boldsymbol{\rho}}  \left( \mathbf{n}
    \right) \right| = \left\| E^{\gamma}_{\boldsymbol{\rho}}
    \left( \mathbf{n} \right) \right\|_1 \frac{\openone}2,
  \end{align*}
  for             any            $\mathbf{n}             \in
  \mathcal{N}_{\mathcal{M}}$.  Therefore, the  range $\left|
  E^{\gamma}_{\boldsymbol{\rho}}  \left( \mathcal{N}  \left(
  \mathcal{M} \right)  \right) \right|$ is  totally ordered.
  Hence, there exists $\mathbf{n}^*$ such that
  \begin{align*}
    \left|       E^{\gamma}_{\boldsymbol{\rho}}       \left(
    \mathbf{n}^*     \right)      \right|     \ge     \left|
    E^{\gamma}_{\boldsymbol{\rho}} \left( \mathbf{n} \right)
    \right|   \ge    E^{\gamma}_{\boldsymbol{\rho}}   \left(
    \mathbf{n} \right),
  \end{align*}
  for   any   $\mathbf{n}  \in   \mathcal{N}_{\mathcal{M}}$.
  Hence,  due  to  Theorem~\ref{thm:helstrom},  the  minimum
  guesswork  $G^{\gamma}_{\min}  ( \boldsymbol{\rho}  )$  is
  given by Eq.~\eqref{eq:gwmin}.  The statement then follows
  by observing that by explicit computation
  \begin{align*}
    &    \left\|    E^{\gamma}_{\boldsymbol{\rho}}    \left(
    \mathbf{n}^*  \right) \right\|_1^2  = \left|  \mathbf{v}
    \left(       E^{\gamma}_{\boldsymbol{\rho}}       \left(
    \mathbf{n}^*  \right) \right)  \right|_2^2  \\  = & 4 \Tr
    \left[\gram  \left( \gamma  - \overline{\gamma}  \right)
      X_{\mathbf{n}}    \gram   \left(    \mathbf{v}   \circ
      \boldsymbol{\rho} \right) X_{\mathbf{n}}^T \right].
  \end{align*}
\end{proof}

In general, the solution of the quadratic assignment problem
in  Corollary~\ref{cor:qubit} requires  an exhaustive  search
over  $\mathcal{N}_{\mathcal{M}}$ (whose  cardinality is  $|
\mathcal{M}|!$), and can be carried  out by using one of the
several  known  algorithms~\cite{Cel98}  for  the  quadratic
assignment   problem.   In   the  following,   however,   we
investigate  situations  in which  such  a  solution can  be
singled-out a priori.

A square  matrix $A$ is  Toeplitz if  and only if  its entry
$A_{i, j}$ only  depends on $i -  j$ for any $i$  and $j$. A
matrix $A$ is benevolent~\cite{BCRW98} if  and only if it is
symmetric, Toeplitz, and satisfies the following properties:
\begin{enumerate}[Property 1]
  \item\label{item:monotone}   $A_{m   +   1,   1}$   is   a
    non-decreasing function of $m$ in $\{1, \dots, \floor{ |
      \mathcal{M} | / 2} \}$.
  \item\label{item:majorize} $A_{| \mathcal{M} | + 1 - m, 1}
    \ge A_{m + 1, 1}$ for  any $m \in \{1, \dots, \floor{( |
      \mathcal{M} | / 2} \}$.
\end{enumerate}

For any set $\mathcal{M} = \{ 1, \dots, |\mathcal{M}| \}$ we
define the numbering $\mathbf{n}_{\bv}$ given by
\begin{align*}
  \left(  \mathbf{n}_{\bv}  \right)^{-1}  \left(  m  \right)
  := \begin{cases}  2 m -  1 & \textrm{  if $m \le  \ceil{ |
        \mathcal{M} |  /2}$},\\ 2 \left(  \left| \mathcal{M}
    \right| + 1 - m \right) & \textrm{ otherwise}.
    \end{cases}
\end{align*}

The following  corollary leverages  known results  about the
quadratic   assignment   problem   to   solve   the   finite
optimization in  Corollary~\ref{cor:qubit} for the  class of
benevolent ensembles.

\begin{cor}
  \label{cor:benevolent}
  For any set $\mathcal{M} = \{ 1, \dots, |\mathcal{M}| \}$,
  any balanced function $\gamma : \{ 1, \dots, |\mathcal{M}|
  \}  \to  \mathbb{R}$,  any two-dimensional  Hilbert  space
  $\mathcal{H}$,  and  any ensemble  $\boldsymbol{\rho}  \in
  \mathcal{E}  (\mathcal{M},  \mathcal{H})$  such  that  the
  prior probability  distribution $\Tr[  \boldsymbol{\rho} (
    \cdot )] = |\mathcal{M}|^{-1}$ is uniform, if $- \gram (
  \mathbf{v} \circ  \boldsymbol{\rho} )$  is permutationally
  equivalent   to  a   benevolent  matrix   and  $\mathbf{v}
  (\overline{\boldsymbol{\rho}})    \cdot    \mathbf{v}    (
  \boldsymbol{\rho}    (\cdot))$     is    constant,    with
  $\overline{\boldsymbol{\rho}} :=  \sum_{m \in \mathcal{M}}
  \boldsymbol{\rho}    (     m    )$,     the    measurement
  $\boldsymbol{\pi}^{\gamma}_{\boldsymbol{\rho},
    \mathbf{n}^*   }$   attains    the   minimum   guesswork
  $G^{\gamma}_{\min}   (  \boldsymbol{\rho}   )$,  that   is
  $G^{\gamma}_{\min}  ( \boldsymbol{\rho}  ) =  G^{\gamma} (
  \boldsymbol{\rho},
  \boldsymbol{\pi}^{\gamma}_{\boldsymbol{\rho},
    \mathbf{n}^*} )$,  where $\mathbf{n}^* =  \sigma_2 \circ
  \mathbf{n}_{\bv}  \circ  \sigma_1^{-1}$  and  permutations
  $\sigma_1, \sigma_2$ are such that $\gamma \circ \sigma_1$
  is   non-decreasing  and   $-\gram   (  \mathbf{v}   \circ
  \boldsymbol{\rho} \circ \sigma_2 )$ is benevolent.
\end{cor}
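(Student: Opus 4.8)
The plan is to reduce the statement to a known result on the quadratic assignment problem. By Corollary~\ref{cor:qubit}, under the uniform-prior and qubit hypotheses the minimum guesswork is attained by $\boldsymbol{\pi}^{\gamma}_{\boldsymbol{\rho}, \mathbf{n}^*}$ with $\mathbf{n}^*$ the maximizer of
\begin{align*}
  \Tr \left[ \gram \left( \gamma - \overline{\gamma} \right) X_{\mathbf{n}} \gram \left( \mathbf{v} \circ \boldsymbol{\rho} \right) X_{\mathbf{n}}^T \right],
\end{align*}
so the whole task is to exhibit this maximizer explicitly as $\mathbf{n}^* = \sigma_2 \circ \mathbf{n}_{\bv} \circ \sigma_1^{-1}$. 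First I would record that $\gram ( \gamma - \overline{\gamma} ) = g g^T$ is the rank-one product matrix built from $g := \gamma - \overline{\gamma}$, which is zero-sum by construction and, because $\gamma$ is balanced, has entries occurring in opposite-sign pairs. Expanding the trace gives $\sum_{k,l} g_{\mathbf{n}(k)} g_{\mathbf{n}(l)} \gram ( \mathbf{v} \circ \boldsymbol{\rho} )_{kl}$, so that maximizing it over $\mathbf{n}$ is the same as minimizing $\sum_{k,l} g_k g_l \, B_{\pi(k)\pi(l)}$ over $\pi = \mathbf{n}^{-1}$, where $B := - \gram ( \mathbf{v} \circ \boldsymbol{\rho} )$.

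Next I would pass to a canonical frame via Lemma~\ref{lmm:covariance}: permuting $\gamma$ by $\sigma_1$ and $\boldsymbol{\rho}$ by $\sigma_2$ conjugates $\gram ( \gamma - \overline{\gamma} )$ and $\gram ( \mathbf{v} \circ \boldsymbol{\rho} )$ by the corresponding permutation matrices, so that the objective evaluated at $\mathbf{n} = \sigma_2 \circ \mathbf{m} \circ \sigma_1^{-1}$ coincides with the analogous objective for $\gamma \circ \sigma_1$ and $\boldsymbol{\rho} \circ \sigma_2$ evaluated at $\mathbf{m}$. It therefore suffices to solve the problem when $\gamma \circ \sigma_1$ is non-decreasing, so that $g$ is sorted, and $B = - \gram ( \mathbf{v} \circ \boldsymbol{\rho} \circ \sigma_2 )$ is benevolent, and then to transport the optimizer back by $\sigma_2 \circ ( \cdot ) \circ \sigma_1^{-1}$.

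The key to matching the hypotheses of the benevolent-matrix theory of Ref.~\cite{BCRW98} is the constant-column-sum assumption. Since $\mathbf{v} ( \overline{\boldsymbol{\rho}} ) \cdot \mathbf{v} ( \boldsymbol{\rho} ( \cdot ) )$ is constant, the row sums of $\gram ( \mathbf{v} \circ \boldsymbol{\rho} )$, and hence of $B$, are constant, and this is exactly the condition under which the QAP objective is invariant when $g g^T$ is shifted by arbitrary row and column constants, the cross terms carrying a factor $\sum_k g_k = 0$ and dropping out. Using the identity $g_k g_l = \tfrac12 g_k^2 + \tfrac12 g_l^2 - \tfrac12 ( g_k - g_l )^2$, this freedom lets me replace $g g^T$ by $- \tfrac12 ( g_k - g_l )^2$, which for sorted $g$ is a monotone anti-Monge matrix. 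I would then invoke the main theorem of Ref.~\cite{BCRW98}, that the quadratic assignment problem with a monotone anti-Monge matrix and a benevolent symmetric Toeplitz matrix is solved by the explicit zig-zag permutation, and identify that permutation with $\mathbf{n}_{\bv}$.

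The main obstacle I anticipate is this last reconciliation with Ref.~\cite{BCRW98}. Two points need care: first, checking that the orientation of the monotonicity matches, so that the optimizer is $\mathbf{n}_{\bv}$ as defined here rather than a reflection of it, which is where the pairing structure of the balanced $\gamma$ (making sorted $g$ antisymmetric about its centre) together with benevolence Property~\ref{item:majorize} enters; and second, confirming that the row/column-shift reduction genuinely places $g g^T$ inside the matrix class for which the cited theorem is proved. Once the canonical optimizer is identified as $\mathbf{n}_{\bv}$, transporting it back through Lemma~\ref{lmm:covariance} yields $\mathbf{n}^* = \sigma_2 \circ \mathbf{n}_{\bv} \circ \sigma_1^{-1}$ and, by Corollary~\ref{cor:qubit}, the optimality of $\boldsymbol{\pi}^{\gamma}_{\boldsymbol{\rho}, \mathbf{n}^*}$, which completes the proof.
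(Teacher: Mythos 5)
Your overall route matches the paper's: reduce to the quadratic assignment problem via Corollary~\ref{cor:qubit}, pass to a canonical frame via Lemma~\ref{lmm:covariance}, use the hypothesis that $\mathbf{v} ( \overline{\boldsymbol{\rho}} ) \cdot \mathbf{v} ( \boldsymbol{\rho} ( \cdot ) )$ is constant to argue that the maximizer is unaffected by row/column-constant shifts of $\gram ( \gamma - \overline{\gamma} )$, and then invoke Theorem~1.6 of Ref.~\cite{BCRW98}. Your reformulation of that hypothesis as constancy of the row sums of $\gram ( \mathbf{v} \circ \boldsymbol{\rho} )$ is correct and equivalent to how the paper uses it (the row sum at $m$ equals $|\mathcal{M}| \, \mathbf{v} ( \boldsymbol{\rho} ( m ) ) \cdot \mathbf{v} ( \overline{\boldsymbol{\rho}} )$), and your sign bookkeeping converting the maximization into the minimization QAP with $B = -\gram ( \mathbf{v} \circ \boldsymbol{\rho} )$ is also fine.

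The genuine gap is exactly at the point you flagged: the matrix you feed into the cited theorem, $A_{kl} = -\tfrac12 ( g_k - g_l )^2$ with $g$ sorted, is anti-Monge but \emph{not} monotone, so it lies outside the hypothesis class of Theorem~1.6 of Ref.~\cite{BCRW98}. Monotonicity there means entries non-decreasing along every row and every column; your matrix attains its maximum value $0$ on the diagonal and decreases away from it in both directions (its first row $-\tfrac12 ( g_1 - g_l )^2$ is non-increasing in $l$), so it is non-monotone whenever $g$ is non-constant. The anti-Monge inequality does hold, since $A_{ij} + A_{rs} - A_{is} - A_{rj} = ( g_i - g_r ) ( g_j - g_s ) \ge 0$ for $i \le r$, $j \le s$, but anti-Monge alone is not enough for the data-independent optimality of the zig-zag permutation. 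The repair is to spend your shift freedom differently, which is what the paper does: choose $k$ with $g + k \ge 0$ and replace $g g^T$ by $\gram ( g + k )$, whose $( i, j )$ entry is $g_i g_j + k g_i + k g_j + k^2$, again a row/column-constant (plus overall constant) shift of $g g^T$, hence covered by your own invariance argument; being the Gram matrix of a non-negative, non-decreasing vector, it \emph{is} monotone anti-Monge. With that single substitution your argument closes and coincides with the paper's proof.
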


\begin{proof}
  Let  us  consider first  the  case  in which  $\gamma$  is
  non-decreasing   and   $-   \gram   (   \mathbf{v}   \circ
  \boldsymbol{\rho})$  is   benevolent.   For  any   $k  \in
  \mathbb{R}$, by explicit computation one has
  \begin{align*}
    E_{\boldsymbol{\rho}}^{\gamma}  \left( \mathbf{n}\right)
    =  2 \sum_{t  = 1}^{\left|  \mathcal{M} \right|}  \left(
    \gamma \left( t \right)  - \overline{\gamma} + k \right)
    \boldsymbol{\rho}  \left(  \mathbf{n} \left(  t  \right)
    \right) - 2 k \overline{\boldsymbol{\rho}},
  \end{align*}
  for any  $\mathbf{n} \in  \mathcal{N}_{\mathcal{M}}$, from
  which it immediately follows that
  \begin{align*}
    &           \left|           \mathbf{v}           \left(
    E_{\boldsymbol{\rho}}^{\gamma}  \left( \mathbf{n}\right)
    \right) \right|_2^2  \\ =  & 4 \left|  \mathbf{v} \left(
    \sum_{t = 1}^{\left|  \mathcal{M} \right|} \left( \gamma
    \left(  t  \right)  -   \overline{\gamma}  +  k  \right)
    \boldsymbol{\rho}  \left(  \mathbf{n} \left(  t  \right)
    \right)     \right)     -    k     \mathbf{v}     \left(
    \overline{\boldsymbol{\rho}} \right) \right|_2^2,
  \end{align*}
  for any $\mathbf{n} \in \mathcal{N}_{\mathcal{M}}$.  Since
  $\mathbf{v}      (\overline{\boldsymbol{\rho}})      \cdot
  \mathbf{v} (  \boldsymbol{\rho} (\cdot))$ is  constant one
  has
  \begin{align*}
    &  \argmax_{\mathbf{n}   \in  \mathcal{N}_{\mathcal{M}}}
    \left| \mathbf{v}  \left( E_{\boldsymbol{\rho}}^{\gamma}
    \left(  \mathbf{n}\right)  \right)   \right|_2^2\\  =  &
    \argmax_{\mathbf{n}    \in    \mathcal{N}_{\mathcal{M}}}
    \left|   \mathbf{v}   \left(    \sum_{t   =   1}^{\left|
      \mathcal{M} \right|} \left( \gamma  \left( t \right) -
    \overline{\gamma} +  k \right)  \boldsymbol{\rho} \left(
    \mathbf{n} \left( t \right) \right) \right) \right|_2^2,
  \end{align*}
  for  any  $\mathbf{n} \in  \mathcal{N}_{\mathcal{M}}$.  By
  explicit computation one has
  \begin{align*}
    &  \argmax_{\mathbf{n}   \in  \mathcal{N}_{\mathcal{M}}}
    \left| \mathbf{v}  \left( E^{\gamma}_{\boldsymbol{\rho}}
    \left( \mathbf{n}^*  \right) \right) \right|_2^2 \\  = &
    \argmax_{\mathbf{n}  \in \mathcal{N}_{\mathcal{M}}}  \Tr
    \left[\gram  \left(  \gamma   -  \overline{\gamma}  +  k
      \right) X_{\mathbf{n}}  \gram \left(  \mathbf{v} \circ
      \boldsymbol{\rho} \right) X_{\mathbf{n}}^T \right].
  \end{align*}
  Take $k$ such that $\gamma - \overline{\gamma} + k \ge 0$.
  Since  $\gram (  \gamma -  \overline{\gamma} +  k )$  is a
  product  matrix and  $\gamma -  \overline{\gamma} +  k$ is
  non-negative  and   non-decreasing,  $\gram  (   \gamma  -
  \overline{\gamma}  +  k  )$ is  monotone  anti-Monge  (see
  Ref.~\cite{BCRW98}  for   the  definition   of  anti-Monge
  matrix).     Hence,    the    statement    follows    from
  Corollary~\ref{cor:qubit}   and    from   Theorem~1.6   of
  Ref.~\cite{BCRW98}.   The case  in which  $\gamma$ is  not
  non-decreasing   or   $-    \gram   (   \mathbf{v}   \circ
  \boldsymbol{\rho}   )$  is   not  benevolent   follows  by
  Lemma~\ref{lmm:covariance}.
\end{proof}

\section{Examples}

For any  set $\mathcal{M}  = \{1, \dots,  |\mathcal{M}| \}$,
any two-dimensional Hilbert space  $\mathcal{H}$, and any $h
\ge  0$, let  ensemble $\boldsymbol{\rho}_h  \in \mathcal{E}
(\mathcal{M}, \mathcal{H})$ be such that
\begin{align*}
  \mathbf{v}  \left( \boldsymbol{\rho}_h  \left( m  \oplus 1
  \right) \right) = \lambda \left( \cos \frac{2 \pi \left( m
    \right)}{\left| \mathcal{M}  \right|}, \sin  \frac{2 \pi
    \left(    m   \right)}{\left|    \mathcal{M}   \right|},
  \left(-1\right)^m h \right),
\end{align*}
for any  $m \in \mathcal{M}$  and for some $\lambda  \ge 0$.
Notice that  for $h  = 0$  the image  $\boldsymbol{\rho}_h (
\mathcal{M} )$ corresponds to a regular polygon in the Bloch
sphere,   whereas  for   $|\mathcal{M}|$   even  the   image
$\boldsymbol{\rho}_h  (  \mathcal{M}  )$ corresponds  to  an
anti-prism in the Bloch sphere.

The   following   corollary   proves  the   benevolence   of
regular-polygonal and anti-prismatic qubit ensembles.

\begin{cor}
  \label{cor:antiprism}
  For any set $\mathcal{M}  = \{1, \dots, |\mathcal{M}| \}$,
  any  non-decreasing  balanced  function $\gamma  :  \{  1,
  \dots,    |\mathcal{M}|    \}   \to    \mathbb{R}$,    any
  two-dimensional  Hilbert space  $\mathcal{H}$, and  any $h
  \ge 0$ such that
  \begin{align}
    \label{eq:antiprism}
    h \le \begin{cases} 0  & \textrm{if } \left| \mathcal{M}
      \right| \textrm{ odd},\\  \sqrt{\frac{1 - \cos \frac{2
            \pi}{   \left|   \mathcal{M}  \right|   }}2}   &
      \textrm{if   }   \frac{\left|  \mathcal{M}   \right|}2
      \textrm{   even},\\  \sqrt{\frac{\cos   \frac{2  \pi}{
            \left|  \mathcal{M}  \right|  } -  \cos  \frac{4
            \pi}{   \left|   \mathcal{M}  \right|   }}2}   &
      \textrm{if   }   \frac{\left|  \mathcal{M}   \right|}2
      \textrm{ odd},
    \end{cases}
  \end{align}
  the                                            measurement
  $\boldsymbol{\pi}^{\gamma}_{\boldsymbol{\rho}_h,
    \mathbf{n}_{\bv}}$   attains   the   minimum   guesswork
  $G^{\gamma}_{\min}  (  \boldsymbol{\rho}_h   )$,  that  is
  $G^{\gamma}_{\min} ( \boldsymbol{\rho}_h  ) = G^{\gamma} (
  \boldsymbol{\rho}_h,
  \boldsymbol{\pi}^{\gamma}_{\boldsymbol{\rho}_h,
    \mathbf{n}_{\bv}} )$.
\end{cor}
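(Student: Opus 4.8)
The plan is to obtain the claim as a direct application of Corollary~\ref{cor:benevolent}, so the task reduces to checking that $\boldsymbol{\rho}_h$ satisfies its two hypotheses and that the optimal numbering it produces collapses to $\mathbf{n}_{\bv}$ (the uniform-prior requirement being built into the definition of $\boldsymbol{\rho}_h$). Writing $N := |\mathcal{M}|$, I would first compute the Gram matrix explicitly: by the definition of $\boldsymbol{\rho}_h$ together with the cosine addition formula, the entry of $\gram(\mathbf{v}\circ\boldsymbol{\rho}_h)$ indexed by $k,k'$ is $\lambda^2[\cos(2\pi(k-k')/N) + (-1)^{k-k'}h^2]$, which depends only on $k-k'$. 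Hence $\gram(\mathbf{v}\circ\boldsymbol{\rho}_h)$, and therefore $-\gram(\mathbf{v}\circ\boldsymbol{\rho}_h)$, is symmetric and Toeplitz, and I write $g(d) := \lambda^2[\cos(2\pi d/N) + (-1)^d h^2]$ for its generator. Because $\gamma$ is non-decreasing by hypothesis, in Corollary~\ref{cor:benevolent} one may take $\sigma_1$ to be the identity; once $-\gram(\mathbf{v}\circ\boldsymbol{\rho}_h)$ is shown to be benevolent as it stands, one may also take $\sigma_2$ to be the identity, so that $\mathbf{n}^* = \sigma_2\circ\mathbf{n}_{\bv}\circ\sigma_1^{-1} = \mathbf{n}_{\bv}$, exactly as claimed.

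The constant-overlap hypothesis is immediate: summing the Bloch vectors over a full period yields $\mathbf{v}(\overline{\boldsymbol{\rho}_h}) = 0$, since $\sum_m\cos(2\pi m/N) = \sum_m\sin(2\pi m/N) = 0$ and $\sum_m(-1)^m = 0$ whenever $N$ is even (and for $N$ odd the bound Eq.~\eqref{eq:antiprism} forces $h=0$, killing the third component). Thus $\mathbf{v}(\overline{\boldsymbol{\rho}_h})\cdot\mathbf{v}(\boldsymbol{\rho}_h(\cdot)) = 0$ is trivially constant. What remains, and where the real work lies, is to show that $-\gram(\mathbf{v}\circ\boldsymbol{\rho}_h)$ is benevolent. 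The majorization property holds with equality, because $g(N-d) = g(d)$: the cosine term is even and periodic, and $(-1)^{N-d} = (-1)^d$ in the only case ($N$ even) with $h\neq 0$. The monotonicity property demands that $g$ be non-increasing on $\{1,\dots,\floor{N/2}\}$; by the sum-to-product identity one finds $g(m) - g(m+1) = 2\lambda^2[\sin(\pi/N)\sin(\pi(2m+1)/N) + (-1)^m h^2]$, which is automatically non-negative for even $m$ and, for odd $m$, non-negative precisely when $h^2\le\sin(\pi/N)\sin(\pi(2m+1)/N)$.

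The main obstacle is therefore to identify the binding member of this family of constraints and to match it against Eq.~\eqref{eq:antiprism}. Since $\sin(\pi(2m+1)/N)$ attains its minimum over the admissible odd indices at the extremes of the range, I expect the binding constraint to sit at the largest admissible odd $m$: for $N\equiv 0\pmod 4$ this is $m = N/2 - 1$, yielding $h^2\le\sin^2(\pi/N) = (1-\cos(2\pi/N))/2$; for $N\equiv 2\pmod 4$ it is $m = N/2 - 2$, yielding $h^2\le\sin(\pi/N)\sin(3\pi/N) = (\cos(2\pi/N)-\cos(4\pi/N))/2$ via the product-to-sum identity; and for $N$ odd the whole family degenerates to $h = 0$. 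These three cases reproduce Eq.~\eqref{eq:antiprism} verbatim, establishing benevolence of $-\gram(\mathbf{v}\circ\boldsymbol{\rho}_h)$, whence the statement follows from Corollary~\ref{cor:benevolent}.

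The one point requiring genuine care is confirming that the minimum of $\sin(\pi(2m+1)/N)$ over odd $m$ in $\{1,\dots,\floor{N/2}-1\}$ is indeed attained at these extreme indices rather than at an interior one; this follows from the concavity of $\sin$ on $(0,\pi)$, together with the reflection symmetry $\sin(\pi(2m+1)/N) = \sin(\pi(N-(2m+1))/N)$ that pairs the two endpoints of the odd-index range and makes them the joint minimizers in the $N\equiv 2\pmod 4$ case.
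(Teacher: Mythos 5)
Your proposal is correct and takes essentially the same route as the paper: both reduce the statement to Corollary~\ref{cor:benevolent} by verifying that $-\gram\left(\mathbf{v}\circ\boldsymbol{\rho}_h\right)$ is symmetric and Toeplitz with Property~2 saturated and with Property~1 holding precisely under the bound~\eqref{eq:antiprism}, taking $\sigma_1=\sigma_2=\mathrm{id}$ so that $\mathbf{n}^*=\mathbf{n}_{\bv}$. Your concavity/reflection-symmetry argument pinning the binding constraint to the extreme odd index is a more explicit justification of the step the paper dismisses as ``by explicit computation,'' but it is the same computation and the same binding pair of indices.
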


\begin{proof}
  If $h =  0$, by the vertex invariance  of regular polygons
  it immediately  follows that $\overline{\boldsymbol{\rho}}
  =   \openone   /   2$  and   $-\gram(   \mathbf{v}   \circ
  \boldsymbol{\rho} )$  is a symmetric Toeplitz  matrix that
  satisfies  the inequality  in Property~\ref{item:monotone}
  and        saturates        the       inequality        in
  Property~\ref{item:majorize},  hence the  first inequality
  in       Eq.~\eqref{eq:antiprism}       follows       from
  Corollary~\ref{cor:benevolent}.

  If $| \mathcal{M} |$ is  even, by the vertex invariance of
  anti-prisms      it      immediately     follows      that
  $\overline{\boldsymbol{\rho}} = \openone / 2$ and $-\gram(
  \mathbf{v}  \circ  \boldsymbol{\rho}  )$  is  a  symmetric
  Toeplitz   matrix  that   saturates   the  inequality   in
  Property~\ref{item:majorize},  hence  in  order  to  apply
  Corollary~\ref{cor:benevolent}    we    need   to    prove
  Property~\ref{item:monotone}.  By explicit computation one
  has
  \begin{align*}
    -\gram   \left(   \mathbf{v}   \circ   \boldsymbol{\rho}
    \right)_{m \oplus 1, 1}  = \lambda^2 \left( \cos \frac{2
      \pi  m}{  \left|  \mathcal{M}  \right|}  +  \left(  -1
    \right)^m h^2 \right).
  \end{align*}
  Two cases need be distinguished.  If $| \mathcal{M} | / 2$
  is  even, by  explicit computation  $- \gram  ( \mathbf{v}
  \circ    \boldsymbol{\rho})_{m    \oplus   1,    1}$    is
  non-increasing in $m$ if and only if
  \begin{align*}
    -\gram   \left(   \mathbf{v}   \circ   \boldsymbol{\rho}
    \right)_{\left|  \mathcal{M} \right|  /  2,  1} +  \gram
    \left(      \mathbf{v}      \circ      \boldsymbol{\rho}
    \right)_{\left| \mathcal{M} \right| / 2 - 1, 1} \le 0,
  \end{align*}
  that is
  \begin{align*}
    -1  - \cos  \left(  \frac{\left|  \mathcal{M} \right|  -
      2}{\left| \mathcal{M} \right|} \pi \right) + 2 h^2 \le
    0,
  \end{align*}
  which   is  equivalent   to  the   second  inequality   in
  Eq.~\eqref{eq:antiprism}. If $| \mathcal{M} | / 2$ is odd,
  by  explicit  computation  $-  \gram  (  \mathbf{v}  \circ
  \boldsymbol{\rho})_{m \oplus  1, 1}$ is  non-increasing in
  $m$ if and only if
  \begin{align*}
    -\gram   \left(   \mathbf{v}   \circ   \boldsymbol{\rho}
    \right)_{\left| \mathcal{M} \right| / 2  - 1, 1} + \gram
    \left(      \mathbf{v}      \circ      \boldsymbol{\rho}
    \right)_{\left| \mathcal{M} \right| / 2 - 2, 1} \le 0,
  \end{align*}
  that is
  \begin{align*}
    \cos \left( \frac{\left| \mathcal{M} \right| - 2}{\left|
      \mathcal{M}  \right|}   \pi  \right)  -   \cos  \left(
    \frac{\left| \mathcal{M} \right| - 4}{\left| \mathcal{M}
      \right|} \pi \right) + 2 h^2 \le 0,
  \end{align*}
  which   is  equivalent   to   the   third  inequality   in
  Eq.~\eqref{eq:antiprism}.
\end{proof}

Notice  that the  real and  complex SIC  and MUBs  ensembles
fulfill  the  hypothesis  of  Corollary~\ref{cor:antiprism}.
Indeed,  the  real  SIC  and MUBs  ensembles  correspond  to
regular polygons  in the  Bloch sphere  with three  and four
vertices, respectively;  that is,  they are obtained  for $|
\mathcal{M} | = 3$ and  $| \mathcal{M} | = 4$, respectively,
with  $h  =  0$.   Analogously, the  complex  SIC  and  MUBs
correspond  to the  regular  simplex and  octahedron in  the
Bloch sphere,  which in turn  are equivalent to  the uniform
(and thus  regular) anti-prisms with four  and six vertices;
that is, they are obtained for  $| \mathcal{M} | = 4$ and $|
\mathcal{M} |  = 6$,  respectively, with $h$  saturating the
bounds   in   Eq.~\eqref{eq:antiprism},   that   is   $h   =
1/\sqrt{2}$.

\section{Conclusion}

In  this  work,  for  arbitrary balanced  cost  function  we
derived  (Theorem~\ref{thm:helstrom}) sufficient  conditions
under  which  the guesswork  is  attained  by a  two-outcome
measurement,   as   well   as   the   closed-form   solution
(Corollary~\ref{cor:qubit})  of  the guesswork  problem  for
qubit  ensembles with  uniform  probability distribution  in
terms  of  the quadratic  assignment  problem,  that is,  an
optimization    over    a    finite    set.     We    solved
(Corollary~\ref{cor:benevolent})  the  quadratic  assignment
problem  for  a class  of  qubit  ensembles referred  to  as
benevolent,  and  we proved  (Corollary~\ref{cor:antiprism})
the benevolence of  anti-prismatic ensembles, including real
and complex SICs and MUBs.

\section{Acknowledgments}

The  author is  grateful to  Alessandro Bisio  and Francesco
Buscemi  for insightful  comments.  The  author acknowledges
support  from   the  Department  of  Computer   Science  and
Engineering,   Toyohashi  University   of  Technology,   the
International  Research Unit  of Quantum  Information, Kyoto
University, and the JSPS KAKENHI Grant Number JP20K03774.

\textbf{Michele Dall'Arno}  received his PhD  in theoretical
physics from  the University  of Pavia,  Italy, in  2012. He
held  positions  in   ICFO  (Barcelona),  Nagoya  University
(Japan),  the  National   University  of  Singapore,  Waseda
University  (Japan), and  Kyoto  University (Japan).   Since
2023, he  is associate  professor in quantum  information in
Toyohashi  University  of  Technology (Japan)  and  visiting
researcher in Kyoto University (Japan).
\end{document}